\theoremstyle{plain}
\theoremstyle{definition}
\theoremstyle{remark}
\newtheorem{theorem}{Theorem}
\newtheorem{assumption}[theorem]{Assumption}
\newtheorem{corollary}[theorem]{Corollary}
\newtheorem{definition}[theorem]{Definition}
\newtheorem{example}[theorem]{Example}
\newtheorem{lemma}[theorem]{Lemma}
\newtheorem{proposition}[theorem]{Proposition}
\newtheorem{remark}[theorem]{Remark}
\begin{document}

\title{{\textit{On the properties of the Lambda value at risk: robustness, elicitability and consistency}}}

\author{M. BURZONI$^{\ast}$\thanks{$^\ast$Email: matteo.burzoni@math.ethz.ch}, I. PERI$^{\dag}$\thanks{$^\dag$Corresponding author.
Email: i.peri@greenwich.ac.uk} and C. M. RUFFO$^{\ddag}$\thanks{$^\ddag$Email: c.ruffo@campus.unimib.it}\\
\affil{$\ast$Department of Mathematics, ETH Z\"{u}rich, R\"{a}mistrasse 101, 8092 Z\"{u}rich, Switzerland\\
$\dag$Department of Finance, University of Greenwich, 30 Park Row, London SE10 9LS, UK\\
$\ddag$Department of Statistics and Quantitative Methods, University of Milano Bicocca, Via Bicocca degli Arcimboldi 8, 20126 Milan, Italy} \received{v1 released April 2016} }

\maketitle

\begin{abstract}
Recently, financial industry and regulators have enhanced the debate on the good properties of a risk measure. A fundamental issue is the evaluation of the quality of a risk estimation. On the one hand, a backtesting procedure is desirable for assessing the accuracy of such an estimation and this can be naturally achieved by elicitable risk measures. For the same objective, an alternative approach has been introduced by \cite{Davis} through the so-called consistency property. On the other hand, a risk estimation should be less sensitive with respect to small changes in the available data set and exhibit qualitative robustness. A new risk measure, the Lambda value at risk ($\Lambda VaR$), has been recently proposed by \cite{FMP14}, as a generalization of $VaR$ with the ability to discriminate the risk among P$\&$L distributions with different tail behaviour. In this article, we show that $\Lambda VaR$ also satisfies the properties of robustness, elicitability and consistency under some conditions.
\end{abstract}

\begin{keywords}
Consistency; Elicitability; Lambda Value at Risk; Law invariant risk measures; Risk measures; Robustness
\end{keywords}

\begin{classcode}C13, D81, G17\end{classcode}


\section{Introduction}
Risk measurement is a matter of primary concern to the financial services industry. The most widely used risk measure is the value at risk ($VaR$), which is the negative of the right $\lambda$-quantile $q^+_{\lambda}$, for some conventional confidence level $\lambda$ (e.g. $1 \%$). $VaR$ became popular as a law invariant risk measure for its simple formulation and facility of computation, however, it presents several limits. First, $VaR$ lacks convexity with respect to random variables which, in general, penalize diversification. $VaR$ satisfies, instead, the quasi-convexity property with respect to distributions \citep{Drapeau,FMP14}. This condition has a natural interpretation in terms of compound lotteries: the risk of the compound lottery is not higher than the one of the riskiest lottery. Another relevant issue of $VaR$ is the lack of sensitivity to the tail risk as it attributes the same risk to distributions having the same quantile but different tail behaviour.

 Recently, a new risk measure, the Lambda value at risk ($\Lambda VaR$), has been proposed by \cite{FMP14}. $\Lambda VaR$ seems to be interesting for its ability to capture the tail risk by generalizing $VaR$. Specifically, $\Lambda VaR$ is defined as follows:
 \begin{equation*}
\Lambda VaR(F):=-\inf\{x \in \mathbb{R} : F(x) > \Lambda(x)\}
\end{equation*}
where $\Lambda:\mathbb{R} \rightarrow [\lambda^m, \lambda^M]$ with $0<\lambda^m \leq \lambda^M <1$ is a right continuous and monotone function. When the $\Lambda$ function is constantly equal to some $\lambda\in (0,1)$ it coincides with the definition of $VaR$ with confidence level $\lambda$. The main idea is that the confidence level can change and it is a function of the asset's losses. In this way, $\Lambda VaR$ is able to discriminate the risk among P$\&$L distributions with the same quantile but different tail behaviour. In this regard, the sensitivity of $\Lambda VaR$ is up to the $\lambda^m$-quantile of a distribution, since, by definition, $\Lambda VaR(F)\leq VaR_{\lambda^m}(F)$. Nevertheless, the requirement $\lambda^m>0$ is only technical and $\lambda^m$ can be chosen arbitrarily close to $0$.
Properties of $\Lambda VaR$ such as monotonicity and quasiconvexity are obtained in \citep{FMP14} in full generality (i.e. allowing also for $\lambda^m=0$). 

The purpose of this paper is to study if $\Lambda VaR$ satisfies other important properties for a risk measure also satisfied by  $VaR$. We first focus on the so-called \emph{robustness} that refers to the insensitivity of a risk estimator to small changes in the data set. We adopt the Hampel's classical notion of qualitative robustness \citep{Hampel,Huber}, also considered by \cite{CDS10} for general risk measures (a stronger notion has been later proposed by \cite{K12a} for convex risk measures). We show that the historical estimator of $\Lambda VaR$ is robust within a family of distributions which depends on $\Lambda$. In particular, we recover the result of \cite{CDS10} for $VaR$, in the case of $\Lambda\equiv \lambda\in(0,1)$.

A second property we investigate is the \emph{elicitability} for $\Lambda VaR$ . Several authors underlined the importance of this property in the risk management and backtesting practice \citep{G11,Z14,EH14,BB14}. Specifically, the elicitability allows the comparison of risk measure forecasts and provides a natural methodology to perform the backtesting. As for the case of $VaR$, also  $\Lambda VaR$ is elicitable in a particular family of distributions which depends on $\Lambda$ and, for the particular case of $\Lambda\equiv \lambda\in(0,1)$, we recover the results of \cite{G11}. Note that the elicitability for $\Lambda$ decreasing was already observed by \cite{BB14}, we extend here to the most interesting case of $\Lambda$ increasing.

Finally, we study the consistency property, as recently proposed by \cite{Davis}\footnote{During the review process of this paper the consistency property has been renamed by Davis as \textit{calibration} of predictions in a dynamic setting.}. In this study, Davis argues that the decision-theoretic framework of elicitability assumes the strong assumption that the theoretical P$\&$L distribution is known and remain unchanged at any time. He thus suggests, under a more refined framework, the use of the so-called \emph{consistency} property, in order to verify if a risk measure produces accurate estimates.We show that $\Lambda VaR$ satisfies the consistency property without any assumption on the P$\&$L generating process, as in the case of $VaR$.

The structure of the paper is as follows. After introducing the basic notions and definitions, in Section \ref{Notations and definitions}, we start examining the robustness property in Section \ref{robustness}. We dedicate the Section \ref{elicitability} to the elicitability of  $\Lambda VaR$. Finally, in Section \ref{consistency}, we refine the theoretical framework and we verify the consistency of $\Lambda VaR$. 

\section{Notations and definitions}
\label{Notations and definitions}
Let $(\Omega ,\mathcal{F},\mathbb{P})$ be a non-atomic probability space and $L^{0}:=L^{0}(\Omega ,\mathcal{F},\mathbb{P})$ be the space of $\mathcal{F}$-measurable random variables that are $\mathbb{P}$-almost surely finite. We assume that $X \in L^0$ represents a financial position (i.e. a loss when $X<0$ and a profit when $X>0$). Any random variable $X\in L^{0}$ induces a probability measure $P_{X}$ on $(\mathbb{R}$,$\mathcal{B}_{\mathbb{R}})$ by $P_{X}(B)=\mathbb{P}(X^{-1}(B))$ for every Borel set $B\in \mathcal{B}_{\mathbb{R}}$ and $F(x):=P_{X}(-\infty ,x]$ denote its distribution function. Let $\mathcal{D}:=\mathcal{D}(\mathbb{R})$ be the set of distribution functions and $\mathcal{D}_1$ those with finite first moment. 

A risk measure is a map $\rho:L \subseteq L^0 \rightarrow \overline{\mathbb{R}}$ that assigns to each return $X\in L$ a number representing the minimal amount of capital required by the regulator in order to cover its financial risk. The majority of risk measures used in finance are distribution-based risk measures, that is, they assign the same value to random variables with the same distribution. Such risk measures $\rho$ are called law-invariant, more formally they satisfy:
\begin{equation*}
X\sim _{d}Y\Rightarrow \rho (X)=\rho (Y).
\end{equation*}
In this way, a risk measure $\rho$ can be represented as a map on a set $\mathcal{M}\subseteq \mathcal{D}$ of distributions. With a slight abuse of notation, we still denote this map by $\rho$ and set:  
\begin{equation*}
\rho(F) := \rho(X)
\end{equation*}
where $F$ is the distribution function of $X$. Since the seminal paper by \cite{Artzner}, the theory of risk measures has been based on the study of their minimal properties. Also when risk measures are defined on distributions, monotonicity is generally accepted; formally, for any $F_1,F_2 \in \mathcal{M}$, $\rho$ is monotone if:
\begin{equation*}
F_1(x) \geq F_2(x),\  \forall x \in \mathbb{R} \ \text{ implies } \ \rho(F_1) \leq \rho(F_2).
\end{equation*}
Other properties have been discussed by academics. As pointed out in \cite{FMP14}, the convexity property, for risk measures defined on distributions, is not compatible with the translation invariance property. Thus, we might require $\rho$ to satisfy quasiconvexity \citep{Drapeau,FMP14}:
\begin{equation*}
\text{for any } \gamma \in [0,1], \ \ \rho(\gamma F_1 + (1-\gamma)F_2) \leq \max(\rho(F_1),\rho(F_2)).
\end{equation*}
 
It is widely accepted in the financial industry to adopt the risk measure Value at Risk ($VaR$) at a confidence level $\lambda \in (0,1)$, that is defined as follows \citep[see][Definition 3.3]{Artzner}: 
\begin{equation}
VaR_{\lambda}(F):=-\inf\{x \in \mathbb{R} : F(x) > \lambda\}.
\label{VaR}
\end{equation}
 $VaR$ is monotone and quasiconvex \citep{FMP14} but, obviously from the definition, it is not tail-sensitive. In order to overcome its limits, the \cite{FRTB} recommends the use of Expected Shortfall ($ES$), formally given by:
\begin{equation}
ES_{\lambda}(F):=\frac{1}{\lambda}\int^{\lambda}_{0} VaR_{s}(F)ds.
\label{eqES}
\end{equation}

$ES$ is able, by definition, to evaluate the tail risk  and it satisfies the subadditivity property on random variables \citep{Artzner}. 

Another tail sensitive risk measure is Lambda Value at Risk ($\Lambda VaR$), recently introduced by \cite{FMP14}, whose properties are the main topic of this paper. $\Lambda VaR$ generalizes $VaR$ by considering a function $\Lambda$ instead of a constant $\lambda$ in the definition of $VaR$. The advantages of considering the $\Lambda$ function are twofold: on the one hand, $\Lambda VaR$ provides a criterion to change the confidence level when the market condition changes (e.g. putting aside more capital in case of expected greater losses), on the other hand, it allows differentiating the risk of P$\&$L distributions with different tail behaviour. Formally, $\Lambda VaR$ is defined by:
\begin{definition}\label{defLVaR}
 \begin{equation}
\Lambda VaR(F):=-\inf\{x \in \mathbb{R} : F(x) > \Lambda(x)\}
\label{LVaR}
\end{equation}
where $\Lambda:\mathbb{R} \rightarrow [\lambda^m, \lambda^M]$ with $0<\lambda^m \leq \lambda^M <1$ is a right continuous and monotone function.
\end{definition}
Intuitively, if both $F$ and $\Lambda$ are continuous,  $\Lambda VaR$ is given by the smallest intersection between $F$ and $\Lambda$. Unlike ES, $\Lambda VaR$ lacks subadditivity, positive homogeneity and translation invariance when defined on random variables, nevertheless, $\Lambda VaR$ is monotone and quasiconvex on the set of distributions \citep[for a discussion on these properties see Section 4 of][]{FMP14}.

\section{Robustness}
\label{robustness}

Evaluating the goodness of a risk measure involves determining how its computation can be affected by estimation issues. The problem consists in examining the sensitivity of a risk measure to small changes in the available data set; for this reason, \textit{robustness} seems to be a key property. In this context, the first rigorous study is given by \cite{CDS10}. The authors pointed out that the notion of robustness should be referred to the ``risk estimator'', as outcome of a ``risk measurement procedure'' \citep[see][for details]{CDS10}, and they founded the problem on the Hampel's classical notion of qualitative robustness \citep{Hampel,Huber}. Basically, a risk estimator is called robust if small changes in the P$\&$L distribution implies small changes in the law of the estimator. They consider the case of historical estimators $\hat{\rho}^h$, those obtained by applying the risk measure $\rho$ to the empirical distribution $\hat{F}$, and they conclude that historical estimator of $VaR$ leads to more robust procedures than alternative law-invariant coherent risk measures. 

Afterwards, \cite{K12b} and \cite{K12a} argued that the Hampel's notion does not discriminate among P$\&$L distributions with different tail behaviour and, hence, is not suitable for studying the robustness of risk measures that are sensitive to the tails, such as $ES$. So they focused on the case of law-invariant coherent risk measures and they showed that robustness is not entirely lost, but only to some degree, if a stronger notion is used. 

Substantially, the robustness of a risk estimator is based on the choice of a particular metric and different metrics leads to a more or less strong definition. However, as pointed out by \cite{E14}, a proper definition of robustness is still a matter of primary concern. The aim of this section is to study the robustness of $\Lambda VaR$, where we use the weakest definition of robustness proposed by \cite{CDS10}. 

Let us denote with $\textbf{x}\in \mathcal{X}$ the $n$-tuple representing a particular data set, where $\mathcal{X}=\cup_{n \geq 1} \mathbb{R}^n$ is the set of all the possible data sets. The estimation of $F$ given a particular data set $\textbf{x}$ is denoted with $\hat{F}$ and represents the map $\hat{F}:\mathcal{X}\rightarrow \mathcal{D}$. We call risk estimator the map $\hat{\rho}:\mathcal{X} \rightarrow \mathbb{R}$ that associates to a specific data set $\textbf{x}$ the following value: $$\hat{\rho}(\textbf{x}):=\rho(\hat{F}(\textbf{x})).$$ In particular, the historical estimator $\hat{\rho}^h$ associated to a risk measure $\rho$ is the estimator obtained by applying $\rho$ to the empirical P$\&$L distribution, $F^{emp}$, defined by $F^{emp}(x):=\frac{1}{n}\sum_{i=1}^n \mathbf{1}_{{(x \geq x_i)}}$ with $n \geq 1$, that is: $$\hat{\rho}^h(\textbf{x}):=\rho(F^{emp}(\textbf{x})).$$
Let us denote with $d(\cdot,\cdot)$ the L\'evy metric, such that for any two distributions $F,G\in\mathcal{D}$ we have
\begin{equation*}
d(F,G):=\inf\{\varepsilon>0\mid F(x-\varepsilon)-\varepsilon\leq G(x)\leq F(x+\varepsilon)+\varepsilon \quad\forall x\in\mathbb{R}\}.
\end{equation*}
Hereafter, we recall the definition of $\mathcal{C}$-robustness of a risk estimator as proposed by \cite{CDS10}, where $\mathcal{C}$ is a subset of distributions. 
\begin{definition}\citep{CDS10}
A risk estimator $\hat{\rho}$ is $\mathcal{C}$-robust at $F$ if for any $\varepsilon> 0$ there exists $\delta> 0$ and $n_0>1$ such that, for all $G\in \mathcal{C}$:
$$d(G,F)\leq\delta \ \Rightarrow \ d(\mathcal{L}_n(\hat{\rho},G),\mathcal{L}_n(\hat{\rho},F))\leq\varepsilon \quad \forall n\geq n_0$$
where $d$ is the L\'evy distance and $\mathcal{L}_n(\hat{\rho},F)$ is the law of the estimator $\rho(\hat{F}(\mathbf{X}))$ with $\mathbf{X}:=(X_1,\ldots, X_n)$ a vector of independent random variables with common distribution $F$.
\end{definition}

As a consequence of a generalization of the Hampel's theorem, \cite{CDS10} obtained the following result: 
\begin{corollary}\citep{CDS10}
If a risk measure $\rho$ is continuous in $\mathcal{C}$ respect to the L\'evy metric, then the historical estimator, $\hat{\rho}^h$ is $\mathcal{C}$-robust at any $F\in\mathcal{C}$ .
\label{contcorol}
\end{corollary}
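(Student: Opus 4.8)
The plan is to deduce the corollary from the Hampel-type mechanism that underlies qualitative robustness of plug-in estimators: the historical estimator factors as $\hat{\rho}^h(\mathbf{x})=\rho(F^{emp}(\mathbf{x}))$, i.e.\ as the composition of the empirical-distribution map $\mathbf{x}\mapsto F^{emp}(\mathbf{x})\in\mathcal{D}$ with the functional $\rho$. I would first prove that the empirical-distribution map is itself robust as a $\mathcal{D}$-valued estimator, \emph{uniformly in the sample size}, and then transport this robustness through $\rho$ using its continuity. Throughout, I would work with the L\'evy metric $d$ on $\mathcal{D}$ and with the Prokhorov-type metric it induces on laws of $\mathcal{D}$-valued (resp.\ real-valued) random elements, recalling that on $\mathbb{R}$ the L\'evy and Prokhorov metrics are equivalent.

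First step (robustness of the empirical measure). Fix $F\in\mathcal{C}$ and $\varepsilon>0$. The core estimate is that, whenever $d(G,F)\le\delta$, the laws of the empirical distribution under $F^{\otimes n}$ and under $G^{\otimes n}$ are close in Prokhorov metric, uniformly in $n$. I would obtain this by a coupling argument: a Strassen-type coupling of $F$ and $G$ produces, on a common probability space, two i.i.d.\ sequences $(X_i)_{i\ge1}\sim F$ and $(Y_i)_{i\ge1}\sim G$ such that each pair $(X_i,Y_i)$ lies within $\delta$ except with probability at most $\delta$. By the law of large numbers only about a $\delta$-fraction of the pairs are far apart, so $d(F^{emp}(\mathbf{X}),F^{emp}(\mathbf{Y}))\le 2\delta$ with probability close to one \emph{for every} $n$. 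This gives a bound on the Prokhorov distance between $\mathcal{L}_n(F^{emp},G)$ and $\mathcal{L}_n(F^{emp},F)$ controlled by $\delta$ that does not deteriorate as $n\to\infty$.

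Second step (transport through $\rho$). By the Glivenko--Cantelli / Varadarajan theorem, $F^{emp}(\mathbf{X})\to F$ in $d$ almost surely, so the laws $\mathcal{L}_n(F^{emp},F)$ concentrate around the point mass at $F$ as $n$ grows; in particular they live, for large $n$, up to small mass inside a fixed relatively compact $d$-neighbourhood of $F$. Since $\rho:\mathcal{C}\to\mathbb{R}$ is $d$-continuous, the continuous mapping theorem shows that the pushforward $\mu\mapsto\rho_\ast\mu$ is continuous for the Prokhorov metric, so the closeness of the empirical-measure laws from the first step passes to closeness of $\mathcal{L}_n(\hat{\rho}^h,G)=\rho_\ast\mathcal{L}_n(F^{emp},G)$ and $\mathcal{L}_n(\hat{\rho}^h,F)$ in the L\'evy metric on laws of real random variables. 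Choosing $\delta$ and $n_0$ from the quantitative bound of the first step then yields exactly the defining inequality of $\mathcal{C}$-robustness.

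The main obstacle is the double uniformity demanded by the definition, over all $G\in\mathcal{C}$ with $d(G,F)\le\delta$ and over all $n\ge n_0$. Pointwise continuity of $\rho$ and pointwise consistency of $F^{emp}$ are comparatively soft; the delicate points are (i) the uniform-in-$n$ control in the first step, which is the genuine content of the generalized Hampel theorem and is precisely what the coupling construction is designed to secure, and (ii) upgrading the mere continuity of $\rho$ to the uniform continuity needed to convert a Prokhorov bound on measure-laws into an $\varepsilon$-bound on real-laws. For (ii) I would localize: because the relevant empirical-measure laws are eventually concentrated, up to arbitrarily small mass, in the fixed relatively compact $d$-neighbourhood of $F$ identified in the second step, continuity of $\rho$ on this set is automatically uniform, which is what makes the transport uniform in both $G$ and $n$.
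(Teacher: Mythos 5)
The paper itself offers no proof of this corollary: it is imported verbatim from \cite{CDS10} as a consequence of their generalization of Hampel's theorem, so your proposal has to be measured against that standard argument (essentially Huber's proof, adapted to the L\'evy metric). Your first step is precisely the classical key lemma --- uniform robustness of the empirical distribution via a Strassen coupling and the law of large numbers --- and it is sound, apart from the overclaim that the coupling bound holds ``for every $n$''; it holds for $n\ge n_0$, which is all the definition requires. The genuine gap is in your second step. A L\'evy ball $\{G\in\mathcal{D}: d(G,F)\le\delta\}$ is \emph{not} relatively compact: for instance $G_n:=(1-\delta)F+\delta\,\mathbf{1}_{[n,\infty)}$ lies in the ball for every $n$ but converges pointwise to the defective limit $(1-\delta)F$, so no subsequence converges in $\mathcal{D}$. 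Hence the localization you invoke to upgrade continuity of $\rho$ to uniform continuity is unavailable; and even on a genuinely relatively compact set, continuity does not self-improve to uniform continuity unless $\rho$ extends continuously to the compact closure, which may leave $\mathcal{C}$ and even $\mathcal{D}$. Likewise, the pushforward $\mu\mapsto\rho_\ast\mu$ is not Prokhorov-continuous with a modulus under mere pointwise continuity of $\rho$, so the continuous mapping theorem cannot deliver the double uniformity over $G$ and $n$ that you correctly identify as the crux.

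The repair --- and the way the cited result is actually proved --- is that no uniform continuity is needed: route the estimate through the point mass at $\rho(F)$,
\begin{equation*}
d\bigl(\mathcal{L}_n(\hat{\rho},G),\mathcal{L}_n(\hat{\rho},F)\bigr)\le d\bigl(\mathcal{L}_n(\hat{\rho},G),\delta_{\rho(F)}\bigr)+d\bigl(\delta_{\rho(F)},\mathcal{L}_n(\hat{\rho},F)\bigr).
\end{equation*}
Continuity of $\rho$ at the single point $F$ gives $\delta'>0$ with $|\rho(H)-\rho(F)|\le\varepsilon$ whenever $d(H,F)\le\delta'$; for $G$ with $d(G,F)\le\delta'/2$ one has $d(F^{emp},F)\le d(F^{emp},G)+\delta'/2$, and $\mathbb{P}_G\bigl(d(F^{emp},G)>\delta'/2\bigr)\le 2e^{-n(\delta')^2/2}$ \emph{uniformly in $G$} by Dvoretzky--Kiefer--Wolfowitz (or by the coupling you already built), so both terms above are at most $\varepsilon$ for all $n\ge n_0$ and all $G$ in the ball. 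Two further remarks. First, this argument evaluates $\rho$ at empirical distributions, which typically do not belong to $\mathcal{C}$ (for $\mathcal{C}_\lambda$, an empirical law can have a non-unique $\lambda$-quantile); so ``continuous in $\mathcal{C}$'' must be read as continuity at each $F\in\mathcal{C}$ against perturbations ranging over all of $\mathcal{D}$ --- which is exactly what the paper later verifies for $\Lambda VaR$, whose continuity proof never uses $G\in\mathcal{C}_\Lambda$. Your plan, which only uses continuity of $\rho$ restricted to $\mathcal{C}$, would not control $\rho(F^{emp})$. Second, once the detour through $\delta_{\rho(F)}$ is taken, your first step is no longer needed in its coupling form on $\mathbb{R}$, since DKW already gives the uniform-in-$G$ concentration; the coupling becomes essential only in the general Polish-space version of Hampel's theorem.
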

Hence, they show that the historical estimator of $VaR_\lambda$ is robust with respect to the following set:
 \begin{equation}
 \mathcal{C}_{\lambda}:=\left\{F\in\mathcal{D}\mid q^-_\lambda(F)=q^+_\lambda(F)\right\}
\label{Clambda}
 \end{equation}
 where $q^{+}_\lambda(F):=\inf \left\{x \mid F(x)> \lambda \right\}$ and  $q^{-}_\lambda(F):=\inf \left\{x \mid F(x)\geq \lambda \right\}$. Substantially, when the quantile of the true P$\&$L distribution is unique, then the empirical quantile is robust. In addition, they showed that the historical estimator of  $ES_\lambda$ is not robust. More important, they pointed out a conflict between convexity (on random variables) and robustness: any time the convexity property is required on distribution-based risk measures, its historical estimator fails to be robust.

We use the result by \cite{CDS10} in Corollary \ref{contcorol} to prove under which conditions the historical estimator of  $\Lambda VaR$ is robust.
\begin{assumption}
 In this section we assume that $\Lambda:\mathbb{R}\mapsto [\lambda^m,\lambda^M]$ is a continuous function.
\end{assumption}
First, let us consider the following set:
$$E_F:=\{x\in\mathbb{R}\mid F(x)=\Lambda(x)\text{ or } F(x^-)=\Lambda(x)\}$$ which consists of those points where the distribution $F$ (or the left-continuous version of $F$) intersects $\Lambda$. We introduce the following class $\mathcal{C}_{\Lambda}$ of distributions:
 \begin{equation}\label{CLambda}
 \mathcal{C}_{\Lambda}:=\left\{F\in\mathcal{D}\mid F((x,x+\varepsilon))> \Lambda((x,x+\varepsilon))\quad\text{for some }\varepsilon=\varepsilon(x)> 0,\ \forall x\in E_F\right\}
 \end{equation}
 where  $F((x,x+\varepsilon))$ and $\Lambda((x,x+\varepsilon)$ are the images of the interval $(x,x+\varepsilon)$ through $F$ and $\Lambda$ respectively. The set $\mathcal{C}_{\Lambda}$ consists of those distributions that do not coincide with $\Lambda$ on any interval. In the special case of $\Lambda\equiv\lambda\in(0,1)$, it simply means that the quantile is uniquely determined, thus, the family $\mathcal{C}_{\Lambda}$  coincides with the one in \eqref{Clambda} considered by \cite{CDS10} for the robustness of $VaR_\lambda$. Note also that for $\Lambda$ decreasing this condition is automatically satisfied and hence $\mathcal{C}_{\Lambda}=\mathcal{D}$.

In the following proposition we show that the historical estimator of $\Lambda VaR$ is robust in the class $\mathcal{C}_{\Lambda}$ of distribution functions.
 \begin{proposition} $\Lambda VaR$ is continuous on $\mathcal{C}_{\Lambda}$. Hence, $\widehat{\Lambda VaR}^h$ is $\mathcal{C}_{\Lambda}$-robust.
 \end{proposition}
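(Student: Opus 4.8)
The plan is to establish that $\Lambda VaR$ is continuous on $\mathcal{C}_{\Lambda}$ with respect to the L\'evy metric $d$; the $\mathcal{C}_{\Lambda}$-robustness of $\widehat{\Lambda VaR}^h$ then follows at once from Corollary \ref{contcorol}. Write $x_F:=\inf\{x\in\mathbb{R}\mid F(x)>\Lambda(x)\}$, so that $\Lambda VaR(F)=-x_F$; since $F(x)\to 0<\lambda^m$ as $x\to-\infty$ and $F(x)\to 1>\lambda^M$ as $x\to+\infty$, the point $x_F$ is finite. I would fix $F\in\mathcal{C}_{\Lambda}$ and a sequence $G_n$ with $d(G_n,F)\to 0$, set $\eta_n:=d(G_n,F)\to 0$, and prove $x_{G_n}\to x_F$ by establishing the two one-sided bounds $\limsup_n x_{G_n}\le x_F$ and $\liminf_n x_{G_n}\ge x_F$ separately.

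First I would record the structural consequences of $F\in\mathcal{C}_{\Lambda}$. From the definition of the infimum, the right-continuity of $F$ and the continuity of $\Lambda$ one gets $F(x)\le\Lambda(x)$ for $x<x_F$ and $F(x_F^-)\le\Lambda(x_F)\le F(x_F)$. The crucial observation is that no point of $E_F$ lies strictly to the left of $x_F$: if some $x_0<x_F$ satisfied $F(x_0)=\Lambda(x_0)$ or $F(x_0^-)=\Lambda(x_0)$, then the defining property of $\mathcal{C}_{\Lambda}$ would force $F>\Lambda$ on an interval $(x_0,x_0+\varepsilon)$, producing a point of $\{F>\Lambda\}$ below $x_F$ and contradicting the definition of $x_F$. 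Hence $F(x)<\Lambda(x)$ strictly for every $x<x_F$ and $E_F\cap(-\infty,x_F)=\emptyset$. Symmetrically, I would verify that $F>\Lambda$ on some right neighbourhood $(x_F,x_F+\varepsilon)$ in every case: when $x_F\in E_F$ this is exactly the $\mathcal{C}_{\Lambda}$ condition, while in the remaining configuration $F(x_F^-)<\Lambda(x_F)<F(x_F)$ it follows from $F(y)\ge F(x_F)>\Lambda(x_F)$ together with $\Lambda(y)\to\Lambda(x_F)$ as $y\downarrow x_F$.

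For the bound $\limsup_n x_{G_n}\le x_F$ I would fix $\delta>0$ and choose, inside $(x_F,x_F+\min(\varepsilon,\delta))$, a continuity point $z$ of $F$ with $F(z)>\Lambda(z)$ (possible since $F>\Lambda$ on the whole interval and continuity points are dense). The L\'evy inequality gives $G_n(z)\ge F(z-\eta_n)-\eta_n$, and $F(z-\eta_n)\to F(z)>\Lambda(z)$ by continuity of $F$ at $z$; hence $G_n(z)>\Lambda(z)$ for $n$ large, so $z\in\{G_n>\Lambda\}$ and $x_{G_n}\le z<x_F+\delta$.

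The opposite bound $\liminf_n x_{G_n}\ge x_F$ is the delicate one, and this is where I expect the main obstacle: I must upgrade the merely pointwise strict inequality $F<\Lambda$ on $(-\infty,x_F)$ to a uniform separation. Concretely, for fixed $\delta>0$ I would show $c:=\inf_{x\le x_F-\delta}\big(\Lambda(x)-F(x)\big)>0$. On the tail this is clear since $\Lambda\ge\lambda^m$ while $F\to 0$; on a compact piece $[-M,x_F-\delta]$ I would argue by contradiction, extracting from a minimizing sequence $y_k\to y^*$ a limit at which either $F(y^*)=\Lambda(y^*)$ (when $y_k\downarrow y^*$ or $y_k=y^*$, using right-continuity) or $F(y^{*-})=\Lambda(y^*)$ (when $y_k\uparrow y^*$), in both cases placing $y^*\in E_F\cap(-\infty,x_F)$, which is impossible by the second paragraph. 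With the uniform gap $c>0$ secured, the L\'evy inequality $G_n(x)\le F(x+\eta_n)+\eta_n$ combined with the uniform continuity of $\Lambda$ on the relevant compact set yields $G_n(x)<\Lambda(x)$ for all $x\le x_F-\delta$ once $\eta_n$ is small, whence $x_{G_n}\ge x_F-\delta$. Combining the two bounds and letting $\delta\downarrow 0$ gives $x_{G_n}\to x_F$, i.e. $\Lambda VaR(G_n)\to\Lambda VaR(F)$, which is the desired continuity; the proposition follows.
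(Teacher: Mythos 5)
Your proof takes essentially the same route as the paper's: reduce to L\'evy-metric continuity and invoke Corollary \ref{contcorol}; use the definition of $\mathcal{C}_{\Lambda}$ to rule out intersection points of $F$ and $\Lambda$ strictly to the left of $x_F$; upgrade the pointwise inequality $F<\Lambda$ on $(-\infty,x_F-\delta]$ to a uniform gap by a compactness/contradiction argument on a sequence accumulating at a putative point of $E_F$ (the paper packages this as the exhaustion $(-\infty,\overline{x}-\varepsilon]=\bigcup_{n}A_n$ with $A_n=\{x:\Lambda(x)-F(x+1/(2n))\geq 1/n\}$); and obtain the other bound by transporting, via the L\'evy inequality, a point just to the right of $x_F$ at which $F>\Lambda$. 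The argument is correct, with one small repair needed in the transfer step for $\liminf_n x_{G_n}\geq x_F$: for $x$ just below $x_F-\delta$ the point $x+\eta_n$ may fall in $(x_F-\delta,x_F)$, where your uniform gap is unavailable and only $F\leq\Lambda$ holds, so $F(x+\eta_n)+\eta_n<\Lambda(x)$ does not follow as written; this is fixed either by proving the gap on the slightly larger set $(-\infty,x_F-\delta/2]$ and concluding only for $x\leq x_F-\delta$, or, as the paper does, by building the shift into the uniformly bounded quantity from the start.
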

 \begin{proof}
 We only need to show continuity of $\Lambda VaR$ respect to the L\'evy metric the rest follows from Corollary \ref{contcorol} by \cite{CDS10}.\\
Fix $\varepsilon>0$ and $F\in\mathcal{C}_{\Lambda}$. Let $\overline{x}:= -\Lambda VaR(F)$. For any $n\in\mathbb{N}$, define the sets  $A_n:=\{x\in (-\infty,\overline{x}-\varepsilon]\mid \Lambda(x)-F(x+1/(2n))\geq 1/n\}$. Observe that, for $x\in A_n$, we have $$\frac{1}{n+1}\leq \frac{1}{n}\leq \Lambda(x)-F\left(x+\frac{1}{2n}\right)\leq \Lambda(x)-F\left(x+\frac{1}{2(n+1)}\right)$$ and hence
$A_n\subseteq A_{n+1}$. We first show that $$(-\infty,\overline{x}-\varepsilon]=\bigcup_{n\in\mathbb{N}}A_n.$$
 The inclusion $\supseteq$ is obvious. Fix $x\in (-\infty,\overline{x}-\varepsilon]$ and let $\gamma:=\Lambda(x)-F(x)$. By definition of $\overline{x}$ and $\mathcal{C}_{\Lambda}$ we have that $\Lambda(x)> F(x)$ and hence $\gamma>0$. From the right-continuity of $\Lambda-F$, and the continuity of $\Lambda$, for any $\varepsilon'>0$ there exists $n_0\in\mathbb{N}$ such that $\forall n\geq n_0$, $\Lambda(x+1/(2n))-F(x+1/(2n))\geq \gamma-\varepsilon'$ and $\Lambda(x)-\Lambda(x+1/(2n))\geq -\varepsilon'$. Take now $\varepsilon'=\gamma/4$ to obtain
$$\Lambda(x)-F(x+1/(2n))=\Lambda(x+1/(2n))-F(x+1/(2n))+\Lambda(x)-\Lambda(x+1/(2n))\geq \gamma-\gamma/4-\gamma/4=\gamma/2$$
since $\gamma>0$, for a sufficiently large $n$ we get $\Lambda(x)-F(x+1/(2n))\geq 1/n$ and hence $x\in A_n$ for some $n\in\mathbb{N}$, as claimed.\\

We now show that there exists $n_0\in\mathbb{N}$ such that
$$(-\infty,\overline{x}-\varepsilon]=\bigcup_{n=1}^{n_0}A_n.$$
If indeed $A_{n+1}\setminus A_n \neq\emptyset$ for infinitely many $n\in\mathbb{N}$, then there exists a convergent subsequence $\{x_{k}\}$ with $x_{k}\in A_{n_k+1}\setminus A_{n_k}$ and $\tilde{x}:=\lim_{k\rightarrow\infty} x_k$ such that: i) $-\infty<\tilde{x}\leq\overline{x}-\varepsilon$ and ii) $F(\tilde{x})\geq \Lambda(\tilde{x})$. i) follows from the fact that $\Lambda$ has a lower bound $\lambda^m$ while $F$ obviously tends to $0$ as $x$ approaches $-\infty$. There exists therefore $M>0$ and $n_M$ such that $(-\infty,M]\subseteq A_n$ for every $n\geq n_M$; ii) follows from $x_k\notin A_{n_k}$ which implies $\Lambda(x_k)-F(x_k+1/(2n_k))< 1/{n_k}$ and the right-continuity of $F$ which implies
$$F(\tilde{x})\geq \limsup F(x_k+1/(2n_k))\geq \limsup \Lambda(x_k)-1/{n_k}=\Lambda(\tilde{x}) $$
where the last inequality follows from the continuity of $\Lambda$. If $F(\tilde{x})=\Lambda(\tilde{x})$, by definition of $\mathcal{C}_{\Lambda}$ we obtain $-\Lambda VaR(F)\leq \tilde{x}\leq\overline{x}-\varepsilon$ which is a contradiction. The same conclusion obviously follows when $F(\tilde{x})>\Lambda(\tilde{x})$.\\

We have therefore shown the existence of $n_0\in\mathbb{N}$ such that $\Lambda(x)-F(x+1/(2n_0))\geq 1/{n_0}$ for every $x\in (-\infty,\overline{x}-\varepsilon]$ Take now $\delta_1:=1/(2n_0)$ and $G\in\mathcal{C}_{\Lambda}$ such that $d(F,G)<\delta_1$. We thus have, for any $x\leq\overline{x}-\varepsilon$, \begin{equation*}
\Lambda(x)-G(x)\geq \Lambda(x)-F(x+\delta_1)-\delta_1 \geq \frac{1}{n_0}-\delta_1= \frac{1}{2n_0}>0.
\end{equation*}
It follows \begin{equation}\label{upper_cont}
\Lambda VaR(G)\leq \Lambda VaR(F)+\varepsilon
\end{equation}which is the upper semi-continuity.\\ By showing the lower semi-continuity we conclude the proof.
 From Definition \ref{defLVaR}, for any $\varepsilon>0$, there exists $\hat{x}\in [\overline{x},\overline{x}+\varepsilon]$ such that  $\gamma:=F(\hat{x})-\Lambda (\hat{x})>0$. Since $\Lambda$ is continuous, there exists $\delta>0$ such that for all $\delta'\leq \delta$, $\Lambda(\hat{x})-\Lambda(\hat{x}+\delta')\geq-\gamma/4$. Take now $\delta_2\leq \min\{\delta,\gamma/4,\varepsilon\}$ so that $\hat{x}+\delta_2\in [\overline{x},\overline{x}+\varepsilon]$. By observing that, for $G\in\mathcal{C}_{\Lambda}$ with $d(F,G)<\delta_2$ we have $$G(\hat{x}+\delta_2)-\Lambda(\hat{x}+\delta_2)\geq F(\hat{x})-\delta_2-\Lambda(\hat{x}+\delta_2)\geq F(\hat{x})-\Lambda(\hat{x})-\gamma/4-\delta_2 \geq \gamma/2 $$ we obtain 
 \begin{equation}\label{lower_cont}
\Lambda VaR(G)\geq -\hat{x}-\delta_2\geq \Lambda VaR(F)-\varepsilon.
\end{equation}
By taking $\delta:=\min\{\delta_1,\delta_2\}$ and combining \eqref{upper_cont} and \eqref{lower_cont}, we have that
$$\forall G\in \mathcal{C}_{\Lambda} \ \text{ with }d(F,G)<\delta \Longrightarrow|\Lambda VaR(F)-\Lambda VaR(G)|<\varepsilon$$
as desired.
 \end{proof}

The $\Lambda$ function adds flexibility to $\Lambda VaR$, however, when robustness is required, $\Lambda VaR$ should be constructed as suggested by the set $\mathcal{C}_{\Lambda}$. The $\Lambda$ function has to be chosen continuous and, on any interval, it cannot coincide with any distribution $F$ under consideration. We refer to Example \ref{EliRobLvaR} to show how this condition can be guaranteed given a set of normal distributions of P$\&$Ls.

\section{Elicitability}
\label{elicitability}
The importance of this property from a financial risk management perspective has been highlighted by \cite{EH14} as a consequence of the surprising results obtained by \cite{G11} and \cite{Z14}. Indeed, \cite{EH14} pointed out that the elicitability allows the assessment and the comparison of risk measure forecasting estimations and a straightforward backtesting.   

The term \emph{elicitable} has been introduced by \cite{L08} but the general notion dates back to the pioneering work of \cite{OR85}. In accordance with some parts of the literature, we introduce the notation $T:\mathcal{M}\subseteq  \mathcal{D} \rightarrow 2^{\mathbb{R}}$ to describe a set-valued statistical functional. Let us denote with $S(x,y)$ the realized forecasting error between the ex-ante prediction $x \in \mathbb{R}$ and the ex-post observation $y \in \mathbb{R}$, where $S$ is a function $S: \mathbb{R} \times \mathbb{R} \rightarrow [0, +\infty)$ called ``scoring'' or ``loss''. According to \cite{G11} a scoring function $S$ is consistent for the functional $T$ if
\begin{equation}\label{consistence} 
\mathbb{E}_F[S(t,Y)] \leq  \mathbb{E}_F[S(x,Y)] 
\end{equation}
for all $F$ in $\mathcal{M}$, all $t\in T(F)$ and all $x \in \mathbb{R}$.
It is strictly consistent if it is consistent and equality of the expectations implies that $x \in T(F)$.

\begin{definition}\citep{G11} \label{eli_G}
A set-valued statistical functional $T:\mathcal{M} \rightarrow 2^{\mathbb{R}}$ is elicitable if there exists a scoring function $S$ that is strictly consistent for it.
\end{definition}

\cite{BB14} have recently proposed a slightly different definition of elicitability. They consider only single-valued statistical functionals as a natural requirement in financial applications. In addition, they adopt additional properties for the scoring function. We also consider single-valued statistical functionals but without imposing any restriction on the scoring function. 
   
\begin{definition}\label{eli_argmin}
A statistical functional $T:\mathcal{M} \rightarrow \mathbb{R}$ is elicitable if there exists a scoring function $S$ such that 
\begin{equation}\label{argmin}
T(F)= \arg \min_{x} E_F[S(x,Y)] \quad \forall F \in \mathcal{M}.
\end{equation}
\end{definition}

Definition \ref{eli_G} restricted to the case of single-valued statistical functional is equivalent to Definition \ref{eli_argmin} when the minimum is unique. 
The statistical functional associated to a risk measure is the map $T:\mathcal{M}\rightarrow \overline{\mathbb{R}}$ such that $T(F)=-\rho(F)$ for any distribution $F$. We adopt this sign convention in accordance with part of the literature. We say that a risk measure is elicitable if the associated statistical functional $T$ is elicitable. In the following we will restrict to $\mathcal{M}\subseteq \mathcal{D}_1$ in order to have a finite expectation of the considered scoring functions.

The statistical functional associated to $VaR$, $T(F):=q_\lambda^+(F)$, is elicitable on the following set:
$$
\mathcal{M}_\lambda:=\left\{ F\in \mathcal{D}_1 : F \text{ strictly increasing } \right\}\subseteq\mathcal{C}_{\lambda}
$$
with $\mathcal{C}_{\lambda}$ as in \eqref{Clambda}, and with the following scoring function \citep{G11}:
\begin{equation}
S(x,y)=\lambda (y-x)^{+}+(1-\lambda )(y-x)^{-}.
\label{LossVaR}
\end{equation}

Let us denote with $T_\Lambda: \mathcal{D} \rightarrow \mathbb{R}$ the statistical functional associated to  $\Lambda VaR$ such that:
\begin{equation}\label{TLVaR} 
T_\Lambda(F)=-\Lambda VaR(F)
\end{equation}
and consider the set $\mathcal{M}_\Lambda \subseteq \mathcal{D}_{1}$ defined as follows:
\begin{equation}
\mathcal{M}_\Lambda=\{ F\in \mathcal{D}_1 : \exists \; \bar{x} \; \text{s.t. }\forall x<\bar{x} \text{, } F(x)< \Lambda(x) \text{ and } \forall x > \bar{x} \text{, } F(x)>\Lambda(x) \}.
\label{classeli}
\end{equation}
Once again this set coincides with $\mathcal{M}_\lambda$ when $\Lambda\equiv \lambda$. 
In \cite{BB14} it has been shown that $\Lambda VaR$ is elicitable under a stronger definition of elicitability and for the special case of $\Lambda$ continuous and decreasing. In the next theorem we prove that  $\Lambda VaR$ is elicitable using the general Definition \ref{eli_argmin} and under less restrictive conditions on $\Lambda$. Specifically, we show that $\Lambda VaR$ is elicitable on the particular class of distribution $\mathcal{M}_\Lambda$ in \eqref{classeli} depending on $\Lambda$.  
\begin{theorem}\label{ThmLVaR}
For any monotone and right continuous function $\Lambda: \mathbb{R} \rightarrow [\lambda^m, \lambda^M]$, with $0<\lambda^m \leq \lambda^M <1$, the statistical functional $T_\Lambda:\mathcal{D} \rightarrow \mathbb{R}$ defined in \eqref{TLVaR} is elicitable on the set $\mathcal{M}_\Lambda \subseteq \mathcal{D}_{1} $ defined in \eqref{classeli} with a loss function given by 
\begin{equation}
S(x,y)=(y-x)^- - \int_{y}^x \Lambda(t) dt.
\label{LossLVaR}
\end{equation}
\end{theorem}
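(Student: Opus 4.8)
The plan is to recast elicitability in the sense of Definition \ref{eli_argmin} as a one-dimensional minimization: writing $\bar{S}(x) := E_F[S(x,Y)]$, I would show that for every $F \in \mathcal{M}_\Lambda$ this function is finite and is \emph{uniquely} minimized at the crossing point $\bar{x}$ guaranteed by the definition of $\mathcal{M}_\Lambda$, which is exactly $T_\Lambda(F) = -\Lambda VaR(F) = \inf\{x : F(x) > \Lambda(x)\}$ (the set on the right is $(\bar{x},\infty)$, so its infimum is $\bar{x}$). Finiteness on $\mathcal{M}_\Lambda \subseteq \mathcal{D}_1$ is routine: rewriting $(y-x)^- = (x-y)^+$ and bounding $\left|\int_y^x \Lambda(t)\,dt\right| \le \lambda^M |x-y| \le |x-y|$, both terms are dominated by $E_F[|Y|] + |x| < \infty$ since $F$ has a finite first moment.

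The heart of the argument is a single identity. For $x_0 < x$ I would use $(x-y)^+ - (x_0-y)^+ = \int_{x_0}^x \mathbf{1}_{\{s>y\}}\,ds$ together with the deterministic difference $\int_y^x \Lambda - \int_y^{x_0}\Lambda = \int_{x_0}^x \Lambda(t)\,dt$. Taking $E_F$ and exchanging the order of integration (legitimate by Tonelli, the relevant integrand being nonnegative), and using $\int \mathbf{1}_{\{s>y\}}\,dF(y) = F(s^-)$, I obtain
\[
\bar{S}(x) - \bar{S}(x_0) = \int_{x_0}^x \left[ F(s^-) - \Lambda(s) \right]\,ds, \qquad x_0 < x,
\]
which reduces the whole theorem to controlling the sign of the integrand. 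Specializing to $x_0 = \bar{x}$: for $x > \bar{x}$, the definition of $\mathcal{M}_\Lambda$ gives $F(s) > \Lambda(s)$ for every $s \in (\bar{x},x)$, and since $F$ has at most countably many jumps, $F(s^-) = F(s) > \Lambda(s)$ for a.e.\ such $s$, so the integral is strictly positive and $\bar{S}(x) > \bar{S}(\bar{x})$. For $x < \bar{x}$, I apply the identity with lower endpoint $x$ and upper endpoint $\bar{x}$; here $F(s^-) \le F(s) < \Lambda(s)$ pointwise on $(x,\bar{x})$, so the integral is strictly negative, i.e.\ $\bar{S}(\bar{x}) - \bar{S}(x) < 0$, giving again $\bar{S}(x) > \bar{S}(\bar{x})$. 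Hence $\bar{x} = T_\Lambda(F)$ is the unique minimizer, which is Definition \ref{eli_argmin}. Crucially, this never uses the monotonicity direction of $\Lambda$, only its boundedness and the one-crossing structure of $\mathcal{M}_\Lambda$, so the increasing and decreasing cases are dispatched simultaneously.

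The main obstacle is not a deep idea but the bookkeeping with left limits: one must note that the Fubini exchange produces $F(s^-)$ rather than $F(s)$, and then check that the sign conclusions survive the passage from the pointwise inequalities defining $\mathcal{M}_\Lambda$ (phrased via $F(s)$) to an integrand written via $F(s^-)$. This is handled asymmetrically on the two sides — on $(x,\bar{x})$ the bound $F(s^-) \le F(s) < \Lambda(s)$ is pointwise and needs no exceptional set, whereas on $(\bar{x},x)$ strict positivity is only available a.e.\ (off the countable jump set of $F$), which is enough for the Lebesgue integral. A secondary verification is that $S$ is genuinely a nonnegative loss, which follows from $0 < \Lambda \le \lambda^M < 1$ by treating $x>y$ and $x<y$ separately, and that $S$ collapses to the pinball loss \eqref{LossVaR} when $\Lambda \equiv \lambda$, recovering the $VaR$ scoring function of \cite{G11}.
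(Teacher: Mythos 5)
Your proposal is correct, and it reaches the conclusion by a genuinely different mechanism than the paper. The paper computes the one-sided derivatives of $I(x):=\int_{\mathbb{R}} S(x,y)\,dF(y)$ by dominated convergence, obtaining $\frac{\partial^-}{\partial x}I=F(x^-)-\Lambda(x^-)$ and $\frac{\partial^+}{\partial x}I=F(x)-\Lambda(x)$, reads off their signs on either side of $x^*$ from the definition of $\mathcal{M}_\Lambda$, and then must argue separately that no point other than $x^*$ can be a local minimum; that passage from one-sided derivative signs to strict monotonicity of $I$ away from $x^*$ is the most delicate step of the published argument. You instead integrate from the outset: the Fubini identity $\bar{S}(x)-\bar{S}(x_0)=\int_{x_0}^{x}\bigl(F(s^-)-\Lambda(s)\bigr)\,ds$ turns the comparison of $\bar{S}(x)$ with $\bar{S}(\bar{x})$ into a single sign determination of a Lebesgue integral, so global uniqueness of the minimizer comes out in one step with no local analysis. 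The two routes rest on the same underlying fact --- the ``slope'' of $\bar{S}$ is $F-\Lambda$ --- but yours buys a shorter and more robust global argument at the price of the $F(s^-)$ bookkeeping, which you handle correctly: pointwise via $F(s^-)\le F(s)<\Lambda(s)$ on $(x,\bar{x})$, and off the countable jump set of $F$ on $(\bar{x},x)$, which is harmless for the integral and still yields strict positivity. Your finiteness check on $\mathcal{D}_1$ and the verification that $S\ge 0$ (needed since the paper's scoring functions map into $[0,+\infty)$) fill in details the paper leaves implicit, and you are right that neither route uses the monotonicity direction of $\Lambda$, only the one-crossing structure of $\mathcal{M}_\Lambda$.
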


\begin{proof}
We need to prove that $$T(F)=\text{arg}\min_x \int_{\mathbb{R}} S(x,y) dF(y).$$
In order to find a global minimum we first calculate the left and right derivatives of $\int _{\mathbb{R}} S(x,y) dF(y) $. Applying dominated convergence theorem we obtain:
\begin{align*}
\frac{\partial^{-}}{\partial x} \int _{\mathbb{R}} S(x,y) dF(y) &= \frac{\partial^{-}}{\partial x} \int _{\mathbb{R}} \Big( (y-x)^- - \int_{y}^x \Lambda(t)dt \Big) dF(y)  \\
&=\int_{\mathbb{R}} \Big( \frac{\partial^{-}}{\partial x} (y-x)^- -\frac{\partial^{-}}{\partial x} \int_{y}^x \Lambda(t)dt \Big) dF(y)\\
& = \int_{\mathbb{R}} \Big( \mathbf{1}_{(y<x)} - \Lambda(x^-) \Big)dF(y)\\
&=\lim_{t\uparrow x}F(t)-\Lambda(x^-)=F(x^-)-\Lambda(x^-).
\end{align*}
Analogously for the right derivative
\begin{align*}
\frac{\partial^+}{\partial x} \int _{\mathbb{R}} S(x,y) dF(y) & = \int_{\mathbb{R}} \Big( \mathbf{1}_{(y \leq x)} - \Lambda(x) \Big)dF(y)\\
&=F(x)-\Lambda(x).
\end{align*}
Observe now that $x^*=\inf\{x \in \mathbb{R} : F(x) > \Lambda(x)\}$, that is the statistical functional associated to $\Lambda VaR$, satisfies, for every $F\in\mathcal{M}_{\Lambda}$,
\begin{equation}\label{max_min}
 \begin{split}
 \forall x< x^* \hspace{0.4 cm} F(x) < \Lambda(x),&\hspace{0.4 cm} F(x^-) \leq \Lambda(x^-)\ ;\\
\forall x> x^* \hspace{0.4 cm} F(x) > \Lambda(x),&\hspace{0.4 cm} F(x^-) \geq \Lambda(x^-)\ ;
\end{split}
\end{equation}
from which we deduce 
\begin{equation}\label{max_min2}
 \begin{split}
\forall x<x^* \hspace{0.4 cm} &\frac{\partial^{-}}{\partial x} \int _{\mathbb{R}} S(x,y) dF(y) \leq 0 ,\hspace{0.4 cm} \frac{\partial^{+}}{\partial x} \int _{\mathbb{R}} S(x,y) dF(y) < 0 ;\\
\forall x>x^* \hspace{0.4 cm} &\frac{\partial^{-}}{\partial x} \int _{\mathbb{R}} S(x,y) dF(y) \geq 0 ,\hspace{0.4 cm} \frac{\partial^{+}}{\partial x} \int_{\mathbb{R}} S(x,y) dF(y) > 0 .
\end{split}
\end{equation}
This implies that $x^*$ is a local minimum. By showing that there are no other local minima we obtain that $x^*$ is the unique global minimum. Take first $x<x^*$. Observe that, by applying dominated convergence theorem,  $I(x):=\int_{\mathbb{R}} S(x,y) dF(y)$ is a continuous function. Moreover, $I$ is not constant on any interval in $(-\infty,x^*]$ since, from \eqref{max_min2}, we have $\frac{\partial^{+}}{\partial x} I < 0$. Since $I$ is continuous and, from \eqref{max_min2}, the left and right derivatives are non-positive, we have that any sequence converging to $x^-$ is decreasing. Analogously, any sequence converging to $x^+$ is increasing. In other words, there exists $\delta>0$ such that, $I(x_1)>I(x)>I(x_2)$ for all $x-\delta<x_1<x<x_2<x+\delta$. Thus $x$ is not a local minimum. The case $x>x^*$ is analogous. We can conclude that $\Lambda VaR$ is elicitable on the class of probability measures $\mathcal{M}_\Lambda$ defined in \eqref{classeli}.
\end{proof}

\begin{remark}
It is easy to prove that the scoring function in \eqref{LossLVaR} can be rewritten as follows:
\begin{equation}
S(x,y)=\frac{\int_{y}^x \Lambda(t) dt}{x-y}(y-x)^{+}+\left(1-\frac{\int_{y}^x \Lambda(t) dt}{x-y}\right)(y-x)^{-}.
\end{equation}
if $x\neq y$ and $S(x,x)=0$. It is evident the similarity with the scoring function of $VaR$ in \eqref{LossVaR}. Moreover, note that If $\Lambda$ is non-increasing obviously \eqref{classeli} is satisfied by every $F\in\mathcal{D}_1$ increasing so that we recover the result of \cite{BB14}.
\end{remark}

In general, the elicitability of $\Lambda VaR$ using the scoring function \eqref{LossLVaR} requires that $\Lambda$ is crossed only once by any possible $F$ at the level $\bar{x}=-\Lambda VaR(F)$ as shown in \eqref{classeli}.

\begin{remark}
$\Lambda VaR$ with a decreasing function $\Lambda$ is elicitable on the set of all the distributions. In this case, $\mathcal{M}_\Lambda \equiv \mathcal{D}_{1}$, since $F$ is non-decreasing and the derivatives of $\Lambda$ are negative. When $\Lambda$ is non-increasing, $\Lambda VaR$ is elicitable on the set of increasing distribution functions.
\end{remark}

If we additionally require continuity of $\Lambda$ we observe that $\mathcal{M}_\Lambda \subseteq \mathcal{C}_\Lambda$ where $\mathcal{C}_\Lambda$ is defined in \eqref{CLambda}. This implies that the set of distributions where $\Lambda VaR$ is elicitable guarantees also that $\Lambda VaR$ is robust. Hereafter, we provide an example of a construction of $\Lambda VaR$ with non-decreasing $\Lambda$ that is elicitable and robust given a set of normal distributions of P$\&$Ls. 

\begin{example}\label{EliRobLvaR}
Denote by $\Phi(x)$ the distribution function of a standard normal distribution. Let $\mathcal{M}:=\{\Phi(\frac{x-\mu_i}{\sigma_i})\}_{i\in I}$ for some collection $I$ such that $\overline{\mu}:=\sup \mu_i<\infty$ and $\underline{\sigma}:=\inf \sigma_i>0$. Set $\mu>\overline{\mu}$, $0<\sigma<\underline{\sigma}$ and define
\begin{equation*}
\Lambda(x):=\begin{cases}\lambda^m & x\leq x^m \\
\Phi\left(\dfrac{x-\mu}{\sigma}\right)& x^m \leq x <  x^M\\ \lambda^M & x\geq x^M. \\ \end{cases}
\end{equation*}
If $x^m\leq x^M$ are such that $0<\lambda_m\leq \Phi(\frac{x^m-\mu}{\sigma})$ and $\Phi(\frac{x^M-\mu}{\sigma})\leq \lambda^M$ then $\Lambda$ is non-decreasing and continuous. Moreover, from Theorem \ref{ThmLVaR}, $\Lambda VaR$ is elicitable on $\mathcal{M}$.
\end{example}

In order to have an elicitable $\Lambda VaR$ with the scoring function \eqref{LossLVaR} we need to build the $\Lambda$ function under a certain condition that depends on the set of the P$\&$L distributions. In particular, the scoring function \eqref{LossLVaR} guarantees the elicitability of $\Lambda VaR$ with non-decreasing $\Lambda$ only in the class of probability measures $\mathcal{M}_\Lambda$ in \eqref{classeli} as shown by the following counterexample.

\begin{example}
 Let $\varepsilon<0.5\%$. Let $\Lambda(x)$ and $F(x)$ as follows
\begin{equation*}F(x)=\left\{\begin{array}{ll}
0& x<-100\\
1.5\% &-100\leq x< 4\\
1& x\geq 4\\
\end{array}\right.\qquad
\Lambda(x)=\varepsilon+\left\{\begin{array}{ll}
0& x<-101\\
(x+101)/100 &-101\leq x< -99\\
2\% & x\geq -99.\\
\end{array}\right.
\end{equation*}
 
$F(x)$ is the cumulative distribution function of a random variable $Y$ with distribution: $Y=-100$ with probability $p=1.5\%$ and $Y=4$ with probability $1-p=98.5\%$.

It is easy to compute that the statistical functional associated to $\Lambda VaR$ is $T_{\Lambda}(F)=-100$. If $\Lambda VaR$ is elicitable $T_{\Lambda}$ should be the minimizer of $$g(x):=\mathbb{E}[S(x,Y)]=S(x,-100)\dfrac{1.5}{100}+S(x,4)\dfrac{98.5}{100}.$$ Since $S$ for $\Lambda VaR$ is defined as in \eqref{LossLVaR}, we need compute the primitive for $\Lambda$ that is given by
$$\Psi(t)=\int \Lambda(t)=\varepsilon t+\left\{\begin{array}{ll}
0& t<-101\\
\dfrac{(t^2/2+101t)}{100} &-101\leq t< -99\\
\dfrac{2}{100}t & t\geq -99.\\
\end{array}\right.$$
Hence, $\Psi(-100)=-51-100\varepsilon$ and $\Psi(4)=8/100+4\varepsilon$, thus, we have $S(x,-100)=(-100-x)^--\Psi(x)-51-100\varepsilon$ and $S(x,4)=(4-x)^--\Psi(x)+8/100+4\varepsilon$ and
$$g(x)=-\Psi(x)+(-100-x)^-\dfrac{1.5}{100}+(4-x)^-\dfrac{98.5}{100}+c$$
where $c=(-51-100\varepsilon)\cdot1.5\%+(0.08+4\varepsilon)\cdot98.5\%$.
Observe now that $\Lambda VaR$ is not the global minimum, since $g(-100)>g(4)$. Indeed:
$$g(-100)-g(4)=-\Psi(-100)+\Psi(4)-104\cdot\dfrac{1.5}{100}=51+\dfrac{8}{100}-104\cdot\dfrac{1.5}{100}>0.$$
\end{example}

\bigskip

We have shown that the scoring function in \eqref{LossLVaR} guarantees elicitability of $\Lambda VaR$ only on the set of distributions $\mathcal{M}_\Lambda$ in \eqref{classeli}. Whether there exists another scoring function that guarantees the elicitability of $\Lambda VaR$ on a larger class of distributions is an interesting question which might be object of further studies. We conclude this Section by discussing some insights on this problem and the difficulties that might arise for such an extension. In particular we investigate a necessary condition for elicitability, namely, the convex level sets property \citep{OR85}. 
\begin{definition}
 If $\mathcal{M}\subseteq \mathcal{D}$ is convex we say that $T$ has convex level sets if, for any $\gamma \in \mathbb{R}$, the level sets
\begin{equation*}
\{T=\gamma \}:=\{F \in \mathcal{M} : T(F) =\gamma \} 
\end{equation*} 
are convex, i.e. for any $\alpha \in [0,1]$ and $F_1,F_2 \in \mathcal{M}$
\begin{equation*}
T(F_1)=T(F_2)=\gamma \Rightarrow T(\alpha F_1+(1-\alpha)F_2)=\gamma.
\end{equation*} 
\end{definition}

\begin{proposition}
\citep{OR85} If a statistical functional $T:\mathcal{M}\subseteq \mathcal{D} \rightarrow \mathbb{R}$ is elicitable, then $T$ has convex level sets.
\end{proposition}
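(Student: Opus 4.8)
The plan is to exploit the single structural fact that makes elicitability tractable: for each fixed $x$, the map $F \mapsto E_F[S(x,Y)] = \int_{\mathbb{R}} S(x,y)\,dF(y)$ is \emph{affine} in $F$. Once this linearity is in hand, convexity of the level sets follows almost immediately from the $\arg\min$ representation in Definition \ref{eli_argmin}. No regularity of $S$ or of the distributions is needed beyond what is already assumed for the expectations to be defined on $\mathcal{M}\subseteq\mathcal{D}_1$.

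Concretely, I would fix $\gamma \in \mathbb{R}$ and $F_1, F_2 \in \mathcal{M}$ with $T(F_1) = T(F_2) = \gamma$, together with $\alpha \in [0,1]$, and set $F_\alpha := \alpha F_1 + (1-\alpha) F_2$. Since $\mathcal{M}$ is assumed convex, $F_\alpha \in \mathcal{M}$, so $T(F_\alpha)$ is defined and the claim $T(F_\alpha)=\gamma$ is meaningful. The first step is to record the affine identity
$$E_{F_\alpha}[S(x,Y)] = \alpha\, E_{F_1}[S(x,Y)] + (1-\alpha)\, E_{F_2}[S(x,Y)] \qquad \forall x \in \mathbb{R}.$$

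The second step uses that $\gamma$ is the minimizer for both $F_1$ and $F_2$: by Definition \ref{eli_argmin}, $E_{F_i}[S(\gamma,Y)] \leq E_{F_i}[S(x,Y)]$ for every $x$ and each $i=1,2$. Taking the convex combination of these two inequalities with weights $\alpha$ and $1-\alpha$ and applying the affine identity gives $E_{F_\alpha}[S(\gamma,Y)] \leq E_{F_\alpha}[S(x,Y)]$ for all $x$; that is, $\gamma$ minimizes $x \mapsto E_{F_\alpha}[S(x,Y)]$.

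The final step is to conclude $T(F_\alpha) = \gamma$, and this is the one point I would flag as the crux. The computation above only shows that $\gamma$ is \emph{a} minimizer, whereas Definition \ref{eli_argmin} identifies $T(F_\alpha)$ with \emph{the} minimizer. Since $T$ is single-valued into $\mathbb{R}$ and is defined to equal the $\arg\min$, the minimizer must be unique for every distribution in $\mathcal{M}$; hence $\gamma$ being a minimizer forces $\gamma = T(F_\alpha)$, which is exactly the convex level sets property. (Were one instead to work with the set-valued strict-consistency formulation of Definition \ref{eli_G}, an analogous combination of the consistency inequalities yields $\gamma \in T(F_\alpha)$ directly, so that no uniqueness hypothesis is required.)
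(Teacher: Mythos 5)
Your argument is correct and is exactly the standard proof of this fact: the paper itself does not prove the proposition (it is quoted from Osband), but the linearity of $F\mapsto E_F[S(x,Y)]$ plus the convex combination of the two minimality inequalities is precisely the intended argument. You also correctly identify and resolve the only delicate point, namely that Definition \ref{eli_argmin} already forces the minimizer of $x\mapsto E_{F_\alpha}[S(x,Y)]$ to be unique for $F_\alpha\in\mathcal{M}$, so that $\gamma$ being \emph{a} minimizer yields $T(F_\alpha)=\gamma$.
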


\cite{G11} showed that $ES$ does not satisfy this necessary condition, as a consequence, $ES$ is not elicitable. 

We have shown in Theorem \ref{ThmLVaR} that $\Lambda VaR$ is elicitable in $\mathcal{M}_\Lambda$, hence, it also has convex level sets in this class of distributions. The following example shows that, in general, $\Lambda VaR$ might not satisfy this condition on a larger set of distributions and, thus, neither elicitability.
\begin{example} Fix $0<\varepsilon<\frac{1}{2}$ and $\lambda^M<1$. Consider 
\begin{equation*}
F_1(x):=\sum_{k=1}^{\infty}\dfrac{1}{2^k}\mathbf{1}_{\left[\frac{1}{k+1},\frac{1}{k}\right)}(x)+\varepsilon\mathbf{1}_{[0,1)} +\mathbf{1}_{[1,\infty)}
\end{equation*} and
\begin{equation*}
F_2(x):=F_1(x)+\sum_{k=1}^{\infty}(-1)^{k}\dfrac{1}{10^{k}}\mathbf{1}_{\left[\frac{1}{k+1},\frac{1}{k}\right)}(x).
\end{equation*}
As a function $\Lambda$ take $\Lambda:=\varepsilon\mathbf{1}_{(-\infty,0)}+\frac{1}{2}(F_1+F_2)\mathbf{1}_{[0,1)}+\lambda^M\mathbf{1}_{[1,\infty)}$. Observe that $\forall k\in \mathbb{N}$ $F_1(\frac{1}{2k})>\Lambda(\frac{1}{2k})$ and $F_2(\frac{1}{2k+1})>\Lambda(\frac{1}{2k+1})$. Moreover, $0=F_1(x)=F_2(x)<\Lambda(x)$ for all $x< 0$. This implies $\Lambda VaR(F_1)=\Lambda Var(F_2)=0$. Nevertheless, since $\Lambda(x)=\lambda^M<1$ for $x\geq 1$, we have $\Lambda Var(\frac{1}{2}F_1+\frac{1}{2}F_2)=-1$, from which the convex level set property fails.
\end{example}
A positive answer for the convex level sets property is given by the choice of a particular class of $\Lambda$ for which the condition is satisfied on the set of increasing distribution functions. 
\begin{lemma} 
If $\Lambda$ is non-decreasing and piecewise constant with a finite number of jumps, then $\Lambda VaR$ has convex level sets on the set of increasing distribution functions.
\end{lemma}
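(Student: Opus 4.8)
The plan is to pass to the statistical functional $T_\Lambda(F)=-\Lambda VaR(F)=\inf\{x:F(x)>\Lambda(x)\}$ and to prove directly that, whenever $F_1,F_2$ are (strictly) increasing distribution functions with $T_\Lambda(F_1)=T_\Lambda(F_2)=x^*$, the convex combination $F_\alpha:=\alpha F_1+(1-\alpha)F_2$, which is again a strictly increasing distribution function, satisfies $T_\Lambda(F_\alpha)=x^*$ for every $\alpha\in[0,1]$; this is exactly convexity of the level set $\{T_\Lambda=x^*\}$ and hence of the corresponding level set of $\Lambda VaR$. The endpoints $\alpha\in\{0,1\}$ are immediate, so I would fix $\alpha\in(0,1)$.

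The first substantive step is to cash in the structural hypothesis on $\Lambda$. Because $\Lambda$ is piecewise constant with finitely many jumps and right continuous, there exists $\eta>0$ with $\Lambda\equiv c$ on $[x^*,x^*+\eta)$, where $c:=\Lambda(x^*)$. This local constancy of $\Lambda$ immediately to the right of $x^*$ is what fails in the counterexample preceding the lemma, where $\Lambda$ oscillated on every right neighbourhood of the common crossing point; so this is precisely the step that uses the finiteness of the jumps, and I expect it to be the crux of the argument.

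The lower bound $T_\Lambda(F_\alpha)\ge x^*$ is the easy half: by definition of the infimum, $F_i(x)\le\Lambda(x)$ for all $x<x^*$ and $i=1,2$, so $F_\alpha(x)\le\Lambda(x)$ there and $F_\alpha$ cannot exceed $\Lambda$ strictly below $x^*$. For the reverse direction I would first show $F_i(x^*)\ge c$ for $i=1,2$. Since $T_\Lambda(F_i)=x^*$, every interval $[x^*,x^*+\varepsilon)\subseteq[x^*,x^*+\eta)$ contains a point where $F_i>\Lambda=c$; monotonicity of $F_i$ then forces $F_i(x^*+\varepsilon)>c$ for all small $\varepsilon>0$, and right continuity of $F_i$ yields $F_i(x^*)=\lim_{\varepsilon\downarrow 0}F_i(x^*+\varepsilon)\ge c$.

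Finally I would establish $T_\Lambda(F_\alpha)\le x^*$ by a case split. If $F_1(x^*)>c$ or $F_2(x^*)>c$, then (both values being $\ge c$) $F_\alpha(x^*)>c=\Lambda(x^*)$, and since $\Lambda\equiv c$ on $[x^*,x^*+\eta)$ while $F_\alpha$ is non-decreasing, $F_\alpha>\Lambda$ throughout that interval. The delicate case is $F_1(x^*)=F_2(x^*)=c$, where neither distribution exceeds $\Lambda$ at $x^*$ itself; here I would invoke strict monotonicity, which gives $F_i(x)>F_i(x^*)=c=\Lambda(x)$ for $x\in(x^*,x^*+\eta)$, whence $F_\alpha(x)>\Lambda(x)$ on that interval. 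In either case $\{x:F_\alpha(x)>\Lambda(x)\}$ meets every right neighbourhood of $x^*$, so $T_\Lambda(F_\alpha)\le x^*$; combined with the lower bound this gives $T_\Lambda(F_\alpha)=x^*$. The main obstacle is thus the borderline case $F_1(x^*)=F_2(x^*)=c$, whose resolution rests jointly on the strict increase of the $F_i$ and on $\Lambda$ being flat just to the right of $x^*$, the two features that the oscillating counterexample lacked.
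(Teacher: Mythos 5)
Your proof is correct and follows essentially the same route as the paper: the lower bound $T_\Lambda(F_\alpha)\ge x^*$ is obtained exactly as in the paper, and the upper bound rests on the same key observation that finiteness of the jumps makes $\Lambda$ constant on some $[x^*,x^*+\eta)$. The only difference is that the paper dispenses with your case split (and with strict monotonicity of the $F_i$) by noting that for each $x\in(x^*,x^*+\eta)$ the infimum defining $T_\Lambda(F_i)=x^*$ supplies a witness $y\in[x^*,x)$ with $F_i(y)>\Lambda(y)=\Lambda(x)$, whence $F_i(x)\ge F_i(y)>\Lambda(x)$ by mere non-decreasingness.
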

\begin{proof}
We first observe that, in general, $T_{\Lambda}(F_i)=\gamma$ for $i=1,2$ implies $T_\Lambda(\alpha F_1+(1-\alpha )F_2)\geq \gamma$ for every  $\alpha \in [0,1]$ and $F_1,F_2 \in \mathcal{D}$. To this end, we prove that $\inf\{x \; :\alpha F_{1}(x)+(1-\alpha)F_{2}(x) >  \Lambda(x)\} \geq \gamma$, with $\gamma=T_\Lambda(F_i):=\inf \{x\; :F_i(x) > \Lambda(x)\}$ for $i=1,2$. Note that by definition of $T_{\Lambda}(F_i)$ for $i=1,2$, we have $F_i(x)\leq \Lambda(x)$ for every $x\leq \gamma$. We thus get, for an arbitrary $0\leq\alpha\leq 1$, $\alpha F_{1}(x)+(1-\alpha)F_{2}(x)\leq \Lambda(x)$ for every $x\leq \gamma$ from which $T_{\Lambda}(\alpha F_1+(1-\alpha) F_2)\geq \gamma$.

For the converse inequality observe that there exists $\varepsilon>0$ such that $\Lambda$ is constant on $[\gamma,\gamma+\varepsilon)$. Since $\gamma=\inf \{x\; :F_i(x) > \Lambda(x)\}$  and $F_i$ is non-decreasing, for $i=1,2$, then $\alpha F_1(x)+(1-\alpha) F_2(x)>\Lambda(x)$ for every $x\in(\gamma,\gamma+\varepsilon)$  from which $T_\Lambda(\alpha F_1+(1-\alpha )F_2)\leq \gamma$.
\end{proof}

In conclusion, we have observed that extending the class of distributions for which the convex level sets property holds depends heavily on the specific choice of $\Lambda$ and hence it seems to necessitate a case-by-case study.

\section{Consistency} 
\label{consistency}
In this section we refer to the notion of \textit{consistency} recently studied by \cite{Davis}. Davis recognized the importance of the elicitability property in the backtesting context of risk measures, but he argued that the problem can be better addressed from a different perspective.
The motivation of Davis' study relays on the difficulties of predicting the ``true'' distribution $F$ of portfolio financial returns. Suppose indeed you are given the information up to time $k-1$, at time $k$ only one realization occurs and so there is not enough information to claim if the prediction of $F$ was correct or not. 
Thus, Davis introduces the notion of consistency of a risk estimator that is based on the daily comparison between the realization of the risk estimator and the realized outcome, but without consideration how the predictions were arrived at. Hence, the fundamental difference with the elicitability property is that the assumption on the model generating the conditional distribution of the portfolio returns can change at any time and one should just check if the prediction is performing well or not \citep[see][for a more detailed discussion]{Davis}. \\

In this section we adopt the framework of Davis. Namely, we fix $(\Omega,\mathcal{F},\{\mathcal{F}_k\}_{k \in \mathbb{N}})$ where $\Omega=\prod_{k=1}^{\infty}\mathbb{R}_{(k)}$ is the canonical space for a real-valued data process $Y=\{Y_k\}_{k\in\mathbb{N}}$; $\mathcal{F}$ is the product sigma-algebra generated by the Borel sigma-algebra in each copy of $\mathbb{R}$ (denoted by $\mathbb{R}_{(k)}$); $\{\mathcal{F}_k\}_{k \in \mathbb{N}}$ is the natural filtration of the process $Y$ and $\mathcal{F}_0$ the trivial sigma-algebra. The class of possible models, for this data process, is represented by a collection $\mathcal{P}$ of probability measures denoted by $\mathcal{P}:=\{\mathbb{P}^{\alpha}, \alpha \in \mathfrak{A}\}$, where $\mathfrak{A}$ is an arbitrary index set. We denote  with $\mathbb{E}^{\alpha}$ the expectation with respect to $\mathbb{P}^{\alpha}$. For every $\mathbb{P}^{\alpha}$ it is possible to define, for each $k \geq 1$, the conditional distribution of the random variable $Y_k$ given $\mathcal{F}_{k-1}$, as a map $F^{\alpha}_k:\mathbb{R}\times\Omega\mapsto[0,1]$ satisfying: for $\mathbb{P}^\alpha$-a.e. $\omega$, $F^{\alpha}_k(\cdot,\omega)$ is a distribution function, and for every $x\in\mathbb{R}$, $F^{\alpha}_k(x)=\mathbb{P}^{\alpha}(Y_k \leq x | \mathcal{F}_{k-1})$ $\mathbb{P}^{\alpha}$-a.s.  

\begin{definition}\citep{Davis} \label{defConsistenza}
Let $\mathfrak{B}(\mathcal{P})$ be a set of strictly increasing predictable processes $b=\{b_n\}_{n \in \mathbb{N}}$ such that $ \lim_{n \rightarrow \infty}b_n= \infty$ $\mathbb{P}^{\alpha}$-a.s. for every $\alpha\in\mathfrak{A}$, and $l: \mathbb{R}^2 \rightarrow \mathbb{R}$ a calibration function, that is a measurable function such that $\mathbb{E}^{\alpha}[l(T(F^{\alpha}_k),Y_k)|\mathcal{F}_{k-1}]=0$ for all $\mathbb{P}^{\alpha} \in \mathcal{P}$ . A risk measure $\rho$ is $(l,b, \mathcal{P})$-consistent if the associated statistical functional $T$ satisfies
\begin{equation} \label{defCons}
\lim_{n \rightarrow \infty} \frac{1}{b_n} \sum_{k=1}^n l(T(F^{\alpha}_k),Y_k)=0 \quad \mathbb{P}^\alpha\text{-a.s. } \forall \mathbb{P}^\alpha \in \mathcal{P}.
\end{equation}
\end{definition}Denote by $\mathfrak{P}$ the set of all probability measures and define: 
$$\mathcal{P}^0=\{\mathbb{P}^\alpha \in \mathfrak{P} : \forall k \;  F^\alpha_k(x, \omega) \; \text{ is continuous in} \;  x \; \text{for }\mathbb{P}^\alpha\text{-almost all} \; \omega \in \Omega\}.$$\cite{Davis} showed that $VaR$ satisfies this consistency property for a large class of processes $\mathcal{B}(\mathcal{P})$ and for the large class of data models $\mathcal{P}^0$ with the following calibration function:
$$l(x,y)=\lambda -\mathbf{1}_{(y\leq x)}.$$
The statistical functional associated to $\Lambda VaR$ is given by \eqref{TLVaR}, hence we define for every $k$ and $\alpha\in\mathfrak{A}$:
$$T_\Lambda(F^{\alpha}_k):=\inf\{x\mid F^{\alpha}_k(x)> \Lambda(x)\}.$$
Notice that $\{T_\Lambda(F^{\alpha}_k)\}_{k\in\mathbb{N}}$ and 
$\{\Lambda(T_{\Lambda}(F^{\alpha}_k))\}_{k\in\mathbb{N}}$ are predictable process, as shown in the following lemma.  
\begin{lemma}\label{LemmaMeas}For every $k\geq 1$,
$T_\Lambda(F^{\alpha}_k)$ and $\Lambda(T_\Lambda(F^{\alpha}_k))$ are $\mathcal{F}_{k-1}$-measurable random variables. 
\end{lemma}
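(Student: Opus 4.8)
The plan is to reduce the defining infimum---which ranges over all real $x$---to an infimum over the countable set $\mathbb{Q}$, and then to exhibit the sublevel sets of $T_\Lambda(F^{\alpha}_k)$ as countable unions of sets that are patently $\mathcal{F}_{k-1}$-measurable. I would lean on two facts already available: for each fixed $x$ the map $\omega\mapsto F^{\alpha}_k(x,\omega)=\mathbb{P}^{\alpha}(Y_k\le x\mid\mathcal{F}_{k-1})$ is $\mathcal{F}_{k-1}$-measurable (this is part of the definition of the conditional distribution recalled above), and for $\mathbb{P}^{\alpha}$-almost every $\omega$ the section $F^{\alpha}_k(\cdot,\omega)$ is a bona fide distribution function, hence right continuous; since $\Lambda$ is right continuous by assumption, the function $g_\omega(x):=F^{\alpha}_k(x,\omega)-\Lambda(x)$ is right continuous for almost every $\omega$.

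The key step is the pathwise identity
\begin{equation*}
T_\Lambda(F^{\alpha}_k)(\omega)=\inf\{x\in\mathbb{R}\mid g_\omega(x)>0\}=\inf\{q\in\mathbb{Q}\mid g_\omega(q)>0\},
\end{equation*}
valid for every $\omega$ at which $g_\omega$ is right continuous. The inequality ``$\ge$'' between the two infima is immediate, since the rationals form a subset of $\mathbb{R}$; the reverse inequality is where the argument really bites. Given any $x$ with $g_\omega(x)>0$, right continuity of $g_\omega$ yields a $\delta>0$ with $g_\omega>0$ on the whole interval $[x,x+\delta)$, which therefore contains a rational $q\ge x$ with $g_\omega(q)>0$ and $q$ arbitrarily close to $x$ from above; letting $x$ approach the real infimum then shows the rational infimum cannot exceed it. I expect this rational reduction, together with the bookkeeping of the $\mathbb{P}^{\alpha}$-null set on which $F^{\alpha}_k(\cdot,\omega)$ fails to be a distribution function, to be the only genuinely delicate point. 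On that null set one may simply take the rational-infimum expression as the definition of $T_\Lambda(F^{\alpha}_k)$, which is everywhere $\mathcal{F}_{k-1}$-measurable by the argument below and agrees with the original a.s.

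Granting the identity, for every $c\in\mathbb{R}$ one has
\begin{equation*}
\{\omega\mid T_\Lambda(F^{\alpha}_k)(\omega)<c\}=\bigcup_{q\in\mathbb{Q},\,q<c}\{\omega\mid F^{\alpha}_k(q,\omega)>\Lambda(q)\},
\end{equation*}
a countable union of $\mathcal{F}_{k-1}$-measurable sets, each $\Lambda(q)$ being a deterministic constant, whence $T_\Lambda(F^{\alpha}_k)$ is $\mathcal{F}_{k-1}$-measurable. For the second assertion I would simply invoke that $\Lambda$, being monotone, is Borel measurable, so that $\Lambda(T_\Lambda(F^{\alpha}_k))$ is the composition of a Borel function with an $\mathcal{F}_{k-1}$-measurable random variable and is therefore itself $\mathcal{F}_{k-1}$-measurable.
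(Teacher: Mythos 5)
Your argument is correct and matches the paper's proof in its essential mechanism: both reduce the defining infimum to the rationals via right-continuity of $F^{\alpha}_k(\cdot,\omega)$ and $\Lambda$, and then write a sublevel set $\{T_\Lambda(F^{\alpha}_k)<c\}$ as a countable union (the paper equivalently writes the superlevel set $\{T_\Lambda(F^{\alpha}_k)\geq y\}$ as a countable intersection) of events of the form $\{F^{\alpha}_k(q)>\Lambda(q)\}$, which are $\mathcal{F}_{k-1}$-measurable. For the second assertion the paper passes through the generalized inverse $\Lambda^{-}$ whereas you invoke Borel measurability of the monotone function $\Lambda$ and compose; these are interchangeable, so there is no substantive difference.
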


\begin{proof}
Fix a probability $\mathbb{P}^\alpha$ with $\alpha\in\mathfrak{A}$. Notice first that for any $y\in\mathbb{R}$, for $\mathbb{P}^\alpha$ a.e. $\omega$, we have
\begin{equation*}
\begin{array}{cccc}
T_\Lambda(F^{\alpha}_k)\geq y&\Longleftrightarrow & F^{\alpha}_k(x)\leq \Lambda(x) & \forall x\leq y\\
&\Longleftrightarrow & F^{\alpha}_k(q)\leq \Lambda(q) & \forall q\in\mathbb{Q},\ q\leq y
\end{array}
\end{equation*}
where the last equivalence follows from the right-continuity of $F^\alpha_k$ and $\Lambda$. We therefore have
$$\{\omega\mid T_\Lambda(F^{\alpha}_k)\geq y\}=\bigcap_{q\in\mathbb{Q}\cap (-\infty,y]}\{\omega\mid F^{\alpha}_k(q)\leq \Lambda(q)\}\in\mathcal{F}_{k-1}$$
from which $T_{\Lambda}(F_k^\alpha)$ is an $\mathcal{F}_{k-1}$-measurable random variable. 

$\Lambda(T_{\Lambda}(F_k^\alpha))$ is also $\mathcal{F}_{k-1}$-measurable: since $\Lambda$ is right-continuous $\Lambda(x)\geq y$ iff 
$x\geq\Lambda^{-}(y)$ where $\Lambda^{-}(y):=\inf\{x\in\mathbb{R}\mid \Lambda(x)\geq y\}$ is the generalized inverse \citep[Proposition 1]{EH13} and thus
$$\{\omega\mid \Lambda(T_\Lambda(F^{\alpha}_k))\geq y\}=\{\omega\mid T_\Lambda(F^{\alpha}_k)\geq \Lambda^-(y)\}\in\mathcal{F}_{k-1}.$$
\end{proof}

By following the methodology suggested by \cite{Davis}, we are able to show that $\Lambda VaR$ is consistent for the large class of data models $\mathcal{P}^0$, as shown in the following theorem.
 
\begin{theorem}\label{thmConsistency}
For each $\mathbb{P}^\alpha\in \mathcal{P}^0$, 
\begin{equation} \label{cons_LambdaVaR}
\frac{1}{n}\sum_{k=1}^n \Lambda(T_\Lambda(F^{\alpha}_k))-\mathbf{1}_{(Y_k \leq T_\Lambda(F^{\alpha}_k))} \rightarrow 0 \quad \mathbb{P}^\alpha\text{-a.s.}
\end{equation} \label{cal_LambdaVaR}
Thus, $\Lambda VaR$ is $(l,n,\mathcal{P}^0)$-consistent with 
\begin{equation}l(x,y)=\Lambda(x)-\mathbf{1}_{(y \leq x)}.
\end{equation}
\end{theorem}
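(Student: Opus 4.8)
The plan is to recognize the left-hand side of \eqref{cons_LambdaVaR} as the normalized partial sum of a martingale difference sequence and then invoke a strong law of large numbers for martingales. Fix $\mathbb{P}^\alpha\in\mathcal{P}^0$ and write
\[
D_k:=l\big(T_\Lambda(F^\alpha_k),Y_k\big)=\Lambda\big(T_\Lambda(F^\alpha_k)\big)-\mathbf{1}_{(Y_k\leq T_\Lambda(F^\alpha_k))},
\]
so that the quantity in \eqref{cons_LambdaVaR} is exactly $\frac{1}{n}\sum_{k=1}^n D_k$. The whole argument reduces to showing that $\{D_k\}$ is a martingale difference sequence with uniformly bounded increments and then applying a martingale strong law with normalization $b_n=n$.

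First I would establish that $l$ is a calibration function in the sense of Definition \ref{defConsistenza}, i.e. that $\mathbb{E}^\alpha[D_k\mid\mathcal{F}_{k-1}]=0$. By Lemma \ref{LemmaMeas} both $T_\Lambda(F^\alpha_k)$ and $\Lambda(T_\Lambda(F^\alpha_k))$ are $\mathcal{F}_{k-1}$-measurable, so it suffices to compute $\mathbb{E}^\alpha[\mathbf{1}_{(Y_k\leq T_\Lambda(F^\alpha_k))}\mid\mathcal{F}_{k-1}]$. Using the defining property $F^\alpha_k(x)=\mathbb{P}^\alpha(Y_k\leq x\mid\mathcal{F}_{k-1})$ together with a standard freezing (substitution) argument for the $\mathcal{F}_{k-1}$-measurable threshold $T_\Lambda(F^\alpha_k)$ — which relies on the joint measurability of $(x,\omega)\mapsto F^\alpha_k(x,\omega)$ — this conditional expectation equals $F^\alpha_k(T_\Lambda(F^\alpha_k))$. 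The martingale difference property then follows from the identity
\[
F^\alpha_k\big(T_\Lambda(F^\alpha_k)\big)=\Lambda\big(T_\Lambda(F^\alpha_k)\big)\qquad\mathbb{P}^\alpha\text{-a.s.}
\]

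This identity is where I expect the main work to lie, and it is precisely what forces the restriction to $\mathcal{P}^0$. Writing $x^*:=T_\Lambda(F^\alpha_k)(\omega)=\inf\{x:F^\alpha_k(x)>\Lambda(x)\}$, the inequality $F^\alpha_k(x^*)\geq\Lambda(x^*)$ follows by choosing $x_n\downarrow x^*$ with $F^\alpha_k(x_n)>\Lambda(x_n)$ and passing to the limit using right-continuity of both $F^\alpha_k$ and $\Lambda$. For the reverse inequality I would use that, by definition of the infimum, $F^\alpha_k(x)\leq\Lambda(x)$ for every $x<x^*$; letting $x\uparrow x^*$ and invoking the \emph{continuity} of $F^\alpha_k$ (the defining feature of $\mathcal{P}^0$) gives $F^\alpha_k(x^*)\leq\Lambda(x^{*-})$, and the monotonicity of $\Lambda$ closes the gap, since for non-decreasing $\Lambda$ one has $\Lambda(x^{*-})\leq\Lambda(x^*)$ and equality holds. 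The continuity of $F^\alpha_k$ is genuinely essential here: without it the crossing value and the conditional probability $\mathbb{P}^\alpha(Y_k\leq x^*\mid\mathcal{F}_{k-1})$ need not coincide with $\Lambda(x^*)$.

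With the martingale difference property in hand the remaining steps are routine. The increments are uniformly bounded, since $\Lambda$ takes values in $[\lambda^m,\lambda^M]\subset(0,1)$ and the indicator lies in $\{0,1\}$, whence $|D_k|\leq 1$ and in particular $\sup_k\mathbb{E}^\alpha[D_k^2]<\infty$. Taking $b_n=n$, the series $\sum_{k\geq1}\mathbb{E}^\alpha[D_k^2]/k^2$ converges, so the strong law of large numbers for martingale differences (the version used by \cite{Davis}, obtained by applying the martingale convergence theorem to the $L^2$-bounded martingale $\sum_k D_k/k$ and then Kronecker's lemma) yields $\frac{1}{n}\sum_{k=1}^n D_k\to0$ $\mathbb{P}^\alpha$-a.s. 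This is exactly \eqref{cons_LambdaVaR}; since $l(x,y)=\Lambda(x)-\mathbf{1}_{(y\leq x)}$ is a calibration function and $b_n=n$ is an admissible normalizing sequence in $\mathfrak{B}(\mathcal{P}^0)$, it establishes the $(l,n,\mathcal{P}^0)$-consistency of $\Lambda VaR$.
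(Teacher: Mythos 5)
Your proof is correct and reaches the same conclusion, but it differs from the paper's argument in both of its key steps, and in each case your route is arguably more self-contained. For the calibration identity $\mathbb{E}^\alpha[\mathbf{1}_{(Y_k\leq T_\Lambda(F^\alpha_k))}\mid\mathcal{F}_{k-1}]=\Lambda(T_\Lambda(F^\alpha_k))$, the paper (Lemma \ref{LemmaCond}) goes through the probability integral transform: it sets $U_k:=F_k(Y_k)$, uses continuity of $F_k$ to show $U_k$ is uniform and independent of $\mathcal{F}_{k-1}$, rewrites the event as $\{U_k\leq F_k(T_\Lambda(F_k))\}$ and applies the freezing lemma; crucially, it \emph{asserts} the crossing identity $F_k(T_\Lambda(F_k))=\Lambda(T_\Lambda(F_k))$ without proof. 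You instead substitute the $\mathcal{F}_{k-1}$-measurable threshold directly into the regular conditional distribution to get $F^\alpha_k(T_\Lambda(F^\alpha_k))$ and then \emph{prove} the crossing identity, correctly isolating where the continuity of $F^\alpha_k$ (i.e.\ the restriction to $\mathcal{P}^0$) enters --- this is a genuine improvement in transparency, since that identity is the real content of the lemma. (One caveat you share with the paper: your two-sided argument closes only for non-decreasing $\Lambda$; for a right-continuous non-increasing $\Lambda$ with a downward jump exactly at the crossing point the identity $F(x^*)=\Lambda(x^*)$ can fail, a case the paper silently ignores as well.) For the limit theorem, the paper introduces $Q_n$ and $\langle S\rangle_n$, shows $\langle S\rangle_n\geq n\lambda^m(1-\lambda^M)\to\infty$ and $Q_n/\langle S\rangle_n\geq\varepsilon_\alpha$, and converts Davis's Proposition 6.3 (which gives $S_n/\langle S\rangle_n\to 0$) into $S_n/n\to 0$; you instead invoke the classical Chow-type strong law directly, using boundedness of the increments to get $\sum_k\mathbb{E}^\alpha[D_k^2]/k^2<\infty$, the $L^2$-bounded martingale $\sum_k D_k/k$ and Kronecker's lemma. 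Your normalization argument is simpler and avoids the lower bound on the predictable quadratic variation entirely; just note that this is not literally ``the version used by Davis'' --- his Proposition 6.3 is the $S_n/\langle S\rangle_n$ statement that forces the paper's extra bookkeeping.
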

Before giving the proof of the theorem we show the following lemma. 
\begin{lemma}\label{LemmaCond}For each $\mathbb{P}^\alpha\in \mathcal{P}^0$,
$$\mathbb{E}^{\alpha}\left[\mathbf{1}_{(Y_k\leq T_\Lambda(F^{\alpha}_k))}\mid\mathcal{F}_{k-1}\right]=\Lambda(T_\Lambda(F^{\alpha}_k)),\qquad \mathbb{P}^\alpha\text{-a.s.}$$
\end{lemma}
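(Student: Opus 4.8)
The plan is to split the identity into two independent steps: a measure-theoretic ``freezing'' step that evaluates the conditional expectation at the random but $\mathcal{F}_{k-1}$-measurable threshold $T_\Lambda(F^{\alpha}_k)$, and a pathwise step that identifies the resulting value with $\Lambda(T_\Lambda(F^{\alpha}_k))$. Throughout I abbreviate $T:=T_\Lambda(F^{\alpha}_k)$, which is $\mathcal{F}_{k-1}$-measurable by Lemma \ref{LemmaMeas}, and I work $\mathbb{P}^\alpha$-a.s. on the full-measure set of $\omega$ where $F^{\alpha}_k(\cdot,\omega)$ is a continuous distribution function; this is precisely where membership in $\mathcal{P}^0$ enters.

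First I would prove the substitution identity
\[
\mathbb{E}^{\alpha}\!\left[\mathbf{1}_{(Y_k\leq T)}\mid\mathcal{F}_{k-1}\right]=F^{\alpha}_k(T)\qquad\mathbb{P}^\alpha\text{-a.s.}
\]
Since for a deterministic level $x$ one has $\mathbb{E}^{\alpha}[\mathbf{1}_{(Y_k\leq x)}\mid\mathcal{F}_{k-1}]=F^{\alpha}_k(x)$ by definition of the conditional distribution, I would first verify the identity for a simple $\mathcal{F}_{k-1}$-measurable $T=\sum_i t_i\mathbf{1}_{A_i}$ with $A_i\in\mathcal{F}_{k-1}$, by pulling the indicators $\mathbf{1}_{A_i}$ out of the conditional expectation. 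For a general $T$ I would approximate it from above by simple $\mathcal{F}_{k-1}$-measurable $T_n\downarrow T$; then $\{Y_k\leq T_n\}\downarrow\{Y_k\leq T\}$, so $\mathbf{1}_{(Y_k\leq T_n)}\downarrow\mathbf{1}_{(Y_k\leq T)}$, and dominated convergence for conditional expectations gives $F^{\alpha}_k(T_n)=\mathbb{E}^{\alpha}[\mathbf{1}_{(Y_k\leq T_n)}\mid\mathcal{F}_{k-1}]\to\mathbb{E}^{\alpha}[\mathbf{1}_{(Y_k\leq T)}\mid\mathcal{F}_{k-1}]$. Passing to the limit on the left via the right-continuity of $F^{\alpha}_k$ (valid for any distribution function) yields the claim. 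Note that this step uses only right-continuity of $F^{\alpha}_k$, not continuity.

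The second, pathwise step is to show $F^{\alpha}_k(T)=\Lambda(T)$ $\mathbb{P}^\alpha$-a.s. Fix $\omega$ in the good set and write $x^*:=T(\omega)=\inf\{x: F^{\alpha}_k(x)>\Lambda(x)\}$. For $F^{\alpha}_k(x^*)\geq\Lambda(x^*)$ I would pick $x_n\downarrow x^*$ with $F^{\alpha}_k(x_n)>\Lambda(x_n)$ (possible by definition of the infimum) and pass to the limit using continuity of $F^{\alpha}_k$ and right-continuity of $\Lambda$. For the reverse inequality I would use that $F^{\alpha}_k(x)\leq\Lambda(x)$ for every $x<x^*$ and let $x\uparrow x^*$, invoking continuity of $F^{\alpha}_k$ together with the monotonicity of $\Lambda$ to control the left limit $\Lambda(x^{*-})$ and conclude $F^{\alpha}_k(x^*)\leq\Lambda(x^*)$. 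Combining the two gives the equality; substituting it into the first step yields the lemma.

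I expect this second step to be the main obstacle, and it is exactly here that continuity of $F^{\alpha}_k$ — i.e. the restriction to $\mathcal{P}^0$ — is indispensable: without it the crossing value at $x^*$ could land strictly above $\Lambda(x^*)$ on a jump of $F^{\alpha}_k$, and the calibration identity would fail. The delicate direction is the upper bound, where the behaviour of $\Lambda$ immediately to the left of $x^*$ must be absorbed through its monotonicity and one-sided continuity; the argument is cleanest, and the crossing identity most transparent, when $\Lambda$ is non-decreasing, so that $\Lambda(x^{*-})\leq\Lambda(x^*)$.
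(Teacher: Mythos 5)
Your argument is correct and, where it matters, more explicit than the paper's, but it reaches the conclusion by a different route. The paper sets $U_k:=F^{\alpha}_k(Y_k)$, uses the continuity of $F^{\alpha}_k$ to deduce that $U_k$ is uniform and independent of $\mathcal{F}_{k-1}$, rewrites $\{Y_k\le T\}$ as $\{U_k\le F^{\alpha}_k(T)\}$ with the crossing identity $F^{\alpha}_k(T)=\Lambda(T)$ asserted inline without proof, and concludes via the freezing lemma applied to $h(x,y)=\mathbf{1}_{\{y\le x\}}$. You bypass the uniform transform entirely: your substitution identity $\mathbb{E}^{\alpha}[\mathbf{1}_{(Y_k\le T)}\mid\mathcal{F}_{k-1}]=F^{\alpha}_k(T)$, proved by approximating $T$ from above by simple $\mathcal{F}_{k-1}$-measurable functions and using only right-continuity, is valid for an arbitrary conditional distribution, so the role of $\mathcal{P}^0$ is isolated in the pathwise crossing identity $F^{\alpha}_k(x^*)=\Lambda(x^*)$ --- which you actually prove, whereas the paper merely states it. (Two small points: truncate $T$ or allow countably-valued approximants so that the $T_n$ are genuinely simple; and note that the paper's equivalence $Y_k\le T\Leftrightarrow U_k\le F^{\alpha}_k(T)$ holds only up to a $\mathbb{P}^\alpha$-null set when $F^{\alpha}_k$ has flat stretches, a subtlety your route avoids.) Your caveat about $\Lambda(x^{*-})$ is not mere tidiness: if $\Lambda$ is non-increasing with a downward jump exactly at $x^*$, one can have $\Lambda(x^*)<F^{\alpha}_k(x^*)\le\Lambda(x^{*-})$ even for continuous $F^{\alpha}_k$, in which case the stated identity itself fails; so both your proof and the paper's implicitly require $\Lambda$ to be non-decreasing or continuous at the crossing point, and only your write-up makes this hypothesis visible.
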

\begin{proof}Fix $\mathbb{P}^\alpha\in \mathcal{P}^0$. Since there is no confusion, for ease of notation, we omit the dependence on $\alpha$.
Observe that $U_k:=F_k(Y_k)$ is uniformly distributed and $$Y_k\leq T_\Lambda(F_k) \Longleftrightarrow U_k\leq F_k(T_\Lambda(F_k))=\Lambda(T_\Lambda(F_k)).$$
Note now that $U_k$ is independent of $\mathcal{F}_{k-1}$ since, from the continuity of $F_k$, $\mathbb{P}(U_k\leq u_k\mid \mathcal{F}_{k-1})=\mathbb{P}(Y_k\leq F_k^-(u_{k})\mid\mathcal{F}_{k-1})=F_k(F_k^-(u_k))=u_{k}=\mathbb{P}(U_k\leq u_k)$ (where $F_k^-$ denotes the generalized inverse of $F_k$). Since $\Lambda(T_\Lambda(F_k))$ is $\mathcal{F}_{k-1}$-measurable from Lemma \ref{LemmaMeas}, we can compute the desired conditional expectation through the application of the freezing lemma \citep[Section 9.10]{Williams}. Namely, define $h(x,y):=\mathbf{1}_{\{y\leq x\}}$ and let $\hat{h}(x):=\mathbb{E}[\mathbf{1}_{\{U_k\leq x\}}]=x$.  Since $h$ is a bounded Borel-measurable function and $U_k$ is independent of $\mathcal{F}_{k-1}$, then 
\begin{equation*}
\mathbb{E}\left[\mathbf{1}_{(Y_k\leq T_\Lambda(F_k))}\mid\mathcal{F}_{k-1}\right]=\hat{h}(\Lambda(T_\Lambda(F_k)))=\Lambda(T_\Lambda(F_k))
\end{equation*}
where equalities are intended in the $\mathbb{P}$-a.s. sense.
\end{proof}

\bigskip

\begin{proof}[Proof of Theorem \ref{thmConsistency}]
Define:
\begin{equation*}
Z_k:=\Lambda(T_\Lambda(F^{\alpha}_k))-\mathbf{1}_{(Y_k \leq T_\Lambda(F^{\alpha}_k))},
\end{equation*}
\begin{equation*}
S_n:=\sum_{k=1}^n Z_k,\qquad Q_n:=\sum_{k=1}^n (Z_k)^2,\qquad \langle S\rangle_n:=\sum_{k=1}^n \mathbb{E}^\alpha[(Z_k)^2\mid\mathcal{F}_{k-1}].
\end{equation*}
From Lemma \ref{LemmaCond}, $S_n$ is a martingale since $\mathbb{E}^\alpha[S_n-S_{n-1}\mid\mathcal{F}_{n-1}]=\mathbb{E}^\alpha[Z_n\mid\mathcal{F}_{n-1}]=0$. We now compute $(Z_k)^2$, we use the shorthand $W:=\Lambda(T_\Lambda(F^{\alpha}_k))$ .
\begin{eqnarray*}
(Z_k)^2&=&\mathbf{1}_{(Y_k \leq T_\Lambda(F^{\alpha}_k))}+W^2-2W\mathbf{1}_{(Y_k \leq T_\Lambda(F^{\alpha}_k))}\\
&=& W^2\mathbf{1}_{(Y_k >T_\Lambda(F^{\alpha}_k))}+(1-W)^2\mathbf{1}_{(Y_k \leq T_\Lambda(F^{\alpha}_k))}.
\end{eqnarray*}
Note that since $\lambda^m\leq W\leq \lambda^M$ we obtain $\mathbb{E}^\alpha[(Z_k)^2]\leq\max\{(\lambda^M)^2,(1-\lambda^m)^2\}<\infty$ so that $S_n$ is a square integrable martingale. Moreover, observe that, 
\begin{equation}\label{lowerBound}
(Z_k)^2\geq\min\{(\lambda^m)^2,(1-\lambda^M)^2\}.
\end{equation}
Since $W$ is $\mathcal{F}_{k-1}$-measurable, using Lemma \ref{LemmaCond},
\begin{eqnarray*}
\mathbb{E}^\alpha[(Z_k)^2\mid\mathcal{F}_{k-1}]&=& W^2\mathbb{E}^\alpha[\mathbf{1}_{(Y_k > T_\Lambda(F^{\alpha}_k))}\mid\mathcal{F}_{k-1}]+(1-W)^2\mathbb{E}^\alpha[\mathbf{1}_{(Y_k \leq T_\Lambda(F^{\alpha}_k))}\mid\mathcal{F}_{k-1}]\\&=& W^2(1-W)+(1-W)^2 W\\
&=& W(1-W).
\end{eqnarray*}
It follows that $$\lambda^m(1-\lambda^M)\leq \mathbb{E}^\alpha[(Z_k)^2\mid\mathcal{F}_{k-1}]\leq \lambda^M(1-\lambda^m)$$which firstly implies $\langle S\rangle_n\geq n\lambda^m(1-\lambda^M)\rightarrow\infty$, and, secondly, combined with \eqref{lowerBound},
\begin{equation*}
\dfrac{Q_n}{\langle S\rangle_n}\geq \dfrac{n\min\{(\lambda^m)^2,(1-\lambda^M)^2\}}{n\lambda^M(1-\lambda^m)}=\dfrac{\min\{(\lambda^m)^2,(1-\lambda^M)^2\}}{\lambda^M(1-\lambda^m)}=:\varepsilon_\alpha>0.
\end{equation*}
Notice that $Z_k$ is bounded from above by 1 for every $k \in \mathbb{N}$, thus, we have  $Q_n \leq n$.
We can therefore conclude
\begin{equation*}
\left|\dfrac{S_n}{n}\right|\leq \left|\dfrac{S_n}{Q_n}\right|=\dfrac{\langle S\rangle_n}{Q_n}\left|\dfrac{S_n}{\langle S\rangle_n}\right|\leq \dfrac{1}{\varepsilon_{\alpha}}\left|\dfrac{S_n}{\langle S\rangle_n}\right|\rightarrow 0  \quad \mathbb{P}^\alpha\text{-a.s}.
\end{equation*}
where the last term converges to $0$ from Proposition 6.3 in \cite{Davis}.
\end{proof}

\bigskip

As a consequence of the theorem, similarly to what observed by \cite{Davis} for $VaR$, a risk manager could use the following relative frequency measure 
\begin{equation} 
\frac{1}{n}\sum_{k=1}^n\Lambda(T_\Lambda(F^{\alpha}_k))-\mathbf{1}_{(Y_k \leq T_\Lambda(F^{\alpha}_k))} 
\label{statisticalvar}
\end{equation} 
as test statistic in a finite-sample hypothesis test \citep[as considered in][]{CP}. Obviously $\Lambda VaR$ is also $(l, b',\mathcal{P}^0)$-consistent with $b'_n=n b_n$ and $b=\{b_n\}_n \in \mathfrak{B}(\mathcal{P})$.

Therefore, the consistency of $\Lambda VaR$, as the quantile forecasting, can be obtained under essentially no conditions on the mechanism generating the data. This is not the case of the estimates of the statistical functional $T_{m}$ associated to the conditional mean (such as $ES$) and defined as follows:
$$T_{m}(F^{\alpha}_k):= \int_{\mathbb{R}} x F^{\alpha}_k(dx).$$
Indeed, \cite{Davis} showed that $T_m(F^{\alpha}_k)$ satisfies the condition \eqref{defCons}  with
$l(x,y)=x-y$, $Q_n=\sum_{k=1}^n Z_k^2$, where $Z_k:=Y_k-T_{m}(F_k^\alpha)$, and, remarkably,   
$\mathcal{P}^1 \in \mathfrak{P}$ is the set of probability measures such that: 
\begin{itemize}
\item[i)] for any $k$, $Y_k \in L^2(\mathbb{P}^\alpha)$,
\item[ii)] $\lim_{n \rightarrow \infty} \langle S \rangle_n=\infty$  $\mathbb{P}^\alpha$-a.s., with $\langle S \rangle_n:=\sum_{k=1}^n \mathbb{E}[Z_k^2|\mathcal{F}_{k-1}]$,
\item[iii)] there exists $\varepsilon_{\alpha}>0$ such that $\frac{Q_n}{\langle S \rangle_n}> \varepsilon_{\alpha}$ for large $n$, $\mathbb{P}^{\alpha}$-a.s.
\end{itemize}

In general, the validity of conditions i), ii), iii) might be difficult to check. In addition, the process $Q_n$ is not predictable, thus, it is not possible to conclude that statistical functionals that depends on the mean (such as $ES$) satisfy the consistency property as in Definition \ref{defConsistenza} using this methodology \footnote{We thank an anonymous referee  that pointed out this issue.}. Hence, in line with the elicitability framework, verifying the accuracy of mean-based estimates is definitely more problematic than the same problem for quantile-based forecasts. For the case of $\Lambda VaR$ this is possible and all the conditions are satisfied so that the methodology can be successfully applied.

\section{Conclusions}
We have shown that $\Lambda VaR$, satisfies robustness and elicitability in particular classes of distributions. Robustness requires that the $\Lambda$ function is continuous and does not coincide with the distribution $F$ on any interval. Elicitability requires a bit more, that is, $\Lambda$ is crossed only once by any possible $F$. We have also proposed an example of construction of an elicitable and robust $\Lambda VaR$ given a set of normal distributions. In addition, we have shown that $\Lambda VaR$ satisfies the consistency property without any conditions on the mechanism generating data, allowing a straightforward back-testing.

After the recent financial crisis, the \cite{FRTB} has suggested that banks should abandon $VaR$ in favour of the $ES$ as a standard tool for risk management since $ES$ is able to overcome two main shortcomings of $VaR$: lack of convexity on random variables and insensitivity with respect to tail behaviour. However, $ES$ has also some issues. Specifically, $ES$ is not robust, or only for small degrees when a stronger definition of robustness is required, and it is not elicitable. Recently, \cite{AS14} showed that the elicitability of $ES$ can be reached jointly with $VaR$ \citep[see also][for an extended result]{FZ14}. In addition,  verifying the consistency property for  $ES$ is more problematic. Moreover, a  recent study by \cite{KM15} pointed out that not all the aspects of  $ES$ are well understood. For instance, for positions with a high probability of losses but also high expected gains in the tails, $ES$ does not necessarily perform better than $VaR$ from a liability holders' perspective. Other risk measures which consider the magnitude of losses beyond $ES$ are the expectiles, recently studied by \cite{BD15}.

 In any case, the issue of capturing tail risk remains crucial and cannot be accomplished through $VaR$. The new risk measure, $\Lambda VaR$, may solve this issue since it is able to discriminate the risk among distributions with the same quantile but different tail behaviour and shares with $VaR$ other important properties such as quasi-convexity. On the other hand, $\Lambda VaR$ lacks subadditivity and the flexibility introduced by the $\Lambda$ function requires additional criteria for determining its upper and lower bound. However, we think that $\Lambda VaR$ may be considered as an alternative risk measure valuable for further studies. 

\section*{Acknowledgement}
We wish to thank the two anonymous referees for useful comments and J. Corbetta for helpful discussions on this subject.\\
ETH foundation is gratefully acknowledge for supporting this research.

 \end{document}